\documentclass[11pt, letterpaper]{article}

\usepackage{algorithm}
\usepackage[noend]{algpseudocode}

\usepackage[utf8]{inputenc}
\usepackage[T1]{fontenc}

\usepackage{cite}
\usepackage[pdftex]{graphicx}
\usepackage{subcaption}
\usepackage{threeparttable}
\usepackage[margin=1in]{geometry}
\usepackage{amsmath,setspace,amssymb}
\usepackage{amsthm}
\usepackage{comment}
\usepackage{enumerate}
\usepackage[unicode,pdfencoding=auto]{hyperref}
\usepackage{bookmark}
\usepackage{color} 
\usepackage{thm-restate}

\graphicspath{{./}}

\theoremstyle{plain}
\newtheorem{theorem}{Theorem}[section]
\newtheorem{lemma}[theorem]{Lemma}
\newtheorem{corollary}[theorem]{Corollary}

\theoremstyle{definition}
\newtheorem{definition}[theorem]{Definition}

\newcommand{\Acal}{\mathcal{A}}

\newcommand{\Ccal}{\mathcal{C}}

\newcommand{\Pcal}{\mathcal{P}}
\newcommand{\Qcal}{\mathcal{Q}}

\newcommand{\Scal}{\mathcal{S}}

\newcommand{\Dist}{\mathsf{dist}}
\newcommand{\Deg}{\mathsf{deg}}

\allowdisplaybreaks[2]

\title{Beating Quadratic Time--Message Trade-off \\ in Distributed Minimum Spanning Tree Construction}
\author{Taisuke Izumi\thanks{The University of Osaka. Emails: izumi.taisuke.ist@osaka-u.ac.jp, \{n-kitamura, masuzawa\}@ist.osaka-u.ac.jp}
\and Naoki Kitamura$^*$
\and Toshimitsu Masuzawa$^*$
}

\date{}

\begin{document}

\maketitle
\thispagestyle{empty}

\begin{abstract}
  \sloppy{
    We present a new distributed algorithm for computing a minimum spanning tree (MST) in the
    \textsf{CONGEST-KT$_{1}$} model, where messages are limited to $O(\log n)$ bits and each vertex initially
    knows the identifiers of its neighbors. Our algorithm exposes a two-parameter time--message trade-off:
    for any $0 \leq \lambda \leq \kappa \leq 1/2$, it runs in
    $\tilde{O}(n^{\lambda}D_G + n^{1 - \kappa - \lambda} + n^{1 - 2\kappa + \lambda} + n^{1/2})$ rounds and uses
    $\tilde{O}(\min\{m, n^{1 + \kappa}\})$ messages, where $n$, $m$, and $D_G$ are the number of vertices, edges, and the
    network diameter, respectively. In particular, setting $(\kappa, \lambda) = (1/3, 1/6)$ yields an MST algorithm
    running in $\tilde{O}(n^{1/2} + n^{1/6}D_G)$ rounds with only $\tilde{O}(n^{4/3})$ messages. Under the mild
    assumption $D_G = O(n^{1/3})$, this is round-optimal while improving the best known message bound of
    $\tilde{O}(n^{3/2})$.
    More broadly, our algorithm breaks the quadratic time--message trade-off barrier
    $\mathrm{\# rounds} \cdot \mathrm{\# messages} = \tilde{\Omega}(n^2)$, which
    no previous MST algorithm in the \textsf{CONGEST-KT$_{1}$} model has been able to overcome, and it does so for almost the
    entire range of the diameter $D_G$. As a byproduct, we also obtain new low-message broadcast,
    spanning-tree, and leader-election algorithms.

    The key technical idea behind our algorithm is a new message-efficient construction of $(\alpha, \beta)$-spanners:
    for any $0 \leq \lambda \leq \kappa \leq 1/2$, we build an $(\alpha, \beta)$-spanner with
    $\alpha = \tilde{O}(n^{\lambda})$ and $\beta = \tilde{O}(n^{1 - \kappa - \lambda} + n^{1 - 2\kappa + \lambda})$
    using $\tilde{O}(n^{1 + \kappa})$ messages. Our construction combines the low-diameter decomposition of
    Miller, Peng, and Xu [SPAA, 2013] with two ideas newly introduced in this paper: an $(\alpha, \beta)$-partition
    framework that recasts spanner construction as a partitioning task, and a novel \emph{degree-adaptive clustering}
    technique. To achieve the desired message complexity, we further develop a non-trivial analysis that disentangles the intricate correlation between the random edge sampling and the low-diameter decomposition computed over the sampled edges. This analysis itself is also of independent interest.

  }
\end{abstract}

\setcounter{page}{0}
\newpage
\section{Introduction}

\subsection{Background}
\label{subsec:background}

The efficiency of distributed graph algorithms is traditionally measured by
two fundamental resources, \emph{round complexity} and \emph{message complexity}.
Over the past three decades, the main focus of research in the \textsf{CONGEST}
model has been on minimizing round complexity. For many global graph problems,
including minimum spanning tree (MST), minimum cut, shortest paths, subgraph connectivity,
and a variety of verification problems, the optimal round complexity is now known to be
$\tilde{\Theta}(D_G + \sqrt{n})$, where $n$ and $D_G$ respectively denote the number of vertices and the diameter of the network~\cite{PR00,SHKKNPPW12}.
In contrast, the landscape of \emph{message complexity}
is more subtle and depends critically on the initial knowledge assumptions of the model,
which are often captured by the notion of \textsf{KT$_\rho$} (Knowledge Till radius $\rho$).
The \textsf{KT$_\rho$} assumption means that each vertex $v$ in the network initially knows
the information in the radius-$\rho$ ball centered at $v$. Two major settings are
$\rho = 0, 1$, that is, each vertex has no initial knowledge, or knows the identifiers of
its neighbors. It has been known that these assumptions exhibit a clear separation
in message complexity for many global problems such as broadcast, leader election, spanning tree,
and MST. Under the \textsf{KT$_0$} assumption, those problems inherently require $\tilde{\Omega}(m)$
messages\footnote{While the original paper states that this lower bound holds only
  for deterministic algorithms or comparison-based randomized algorithms (without explicit
  assumption of \textsf{KT$_0$}),
  this lower bound holds for any algorithm under the \textsf{KT$_0$} assumption~\cite{KPPRT15}.},
where $m$ is the number of edges in the input graph~\cite{AGPV90}.
On the other hand, under the \textsf{KT$_1$} assumption, the
additional knowledge enables fundamentally different algorithmic techniques and allows us
to circumvent the classical $\tilde{\Omega}(m)$ message lower bound in \textsf{KT$_0$}.
A breakthrough result by King, Kutten, and Thorup~\cite{KKT15}
demonstrated that an MST can be constructed using only
$\tilde{O}(n)$ messages in the \textsf{CONGEST-KT$_1$} model.
However, it requires $\tilde{O}(n)$ rounds,
which is far from the near-optimal $\tilde{O}(D_G + \sqrt{n})$ round complexity.

Building on this insight,
Gmyr and Pandurangan~\cite{GP18} and Ghaffari and Kuhn~\cite{GK18}
independently studied time--message trade-offs in the \textsf{CONGEST-KT$_1$} model.
They presented a randomized algorithm for constructing a spanning subgraph of
diameter $\tilde{O}(D_G + n^{1-c})$ using $\tilde{O}(\min\{m, n^{1+c}\})$ messages.
This immediately yields algorithms for broadcast, leader election, and spanning tree that run
in $\tilde{O}(n^{1 - c} + D_G)$ rounds and use $\tilde{O}(\min\{m, n^{1 + c}\})$ messages
for a given parameter $0 \leq c \leq 1$.
For harder global problems, including MST and minimum-cut approximation,
they also presented algorithms attaining near-optimal $\tilde{O}(\sqrt{n} + D_G)$ round complexity and
$\tilde{O}(\min\{m, n^{3/2}\})$ message complexity by plugging in the techniques of \cite{KKT15}.

\subsection{Our Contribution}

This paper also investigates the message complexity in the \textsf{CONGEST-KT$_1$} model.
Our central question is whether the worst-case trade-off between message complexity and round complexity
\begin{align*}
  \text{\#messages} \cdot \text{\#rounds}
  = \tilde{\Omega}(n^2)
\end{align*}
constitutes an inherent barrier for global distributed computation.
This quadratic product bound has implicitly guided the design of distributed
algorithms for decades: all previously known near-optimal algorithms conform
to it, and no prior work has succeeded in breaking this trade-off in the
\textsf{CONGEST-KT$_1$} model. Our main result demonstrates that this barrier
is \emph{not} fundamental. The key technical result is a message-efficient construction of
an $(\alpha, \beta)$-spanner, that is, a spanning subgraph $H$ of $G$ such that any pair
$u, v \in V(G)$ admits a path of length at most $\alpha \cdot \Dist_G(u, v) + \beta$ in $H$, where
$\Dist_G(u, v)$ is the $u$--$v$ distance in $G$.

\begin{theorem} \label{thm:main}
  Let $0 \leq \lambda \leq \kappa \leq 1/2$ be arbitrary parameters. There exists
  a randomized algorithm for constructing an $(\alpha, \beta)$-spanner in the \textsf{CONGEST-KT$_1$} model, which
  uses $\tilde{O}(\min\{m, n^{1 + \kappa}\})$ messages, runs in $\tilde{O}(\alpha + \beta + n^{\kappa})$ rounds, and attains $\alpha = \tilde{O}(n^{\lambda})$ and $\beta = \tilde{O}(n^{1 - \kappa - \lambda} + n^{1 - 2\kappa + \lambda})$
  with probability $1 - o(1)$.
\end{theorem}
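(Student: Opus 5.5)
The plan is to reduce spanner construction to a partitioning task and then implement the partition with few messages.

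\textbf{Step 1: the partition framework.} Define an $(\alpha,\beta)$-partition as a partition of $V$ into clusters with three properties. Each cluster carries a spanning tree of radius $O(\beta)$. Every edge of $G$ between two distinct clusters is "covered" by a small number of inter-cluster edges kept in $H$. The cluster graph, restricted to the kept edges, has stretch $\tilde{O}(\alpha)$.

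With such a partition, any $u$--$v$ path in $G$ of length $d$ visits a sequence of clusters. Each step from one cluster to an adjacent one is replaced in $H$ by $\tilde{O}(\alpha)$ inter-cluster hops. Each hop is joined to the next by a walk inside a cluster tree. If cluster diameters are charged only once per cluster crossing of the stretched path, this gives total length $\tilde{O}(\alpha)d+\beta$. The accounting must charge intra-cluster diameters additively, which is why cluster radii go into $\beta$ and not $\alpha$.

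\textbf{Step 2: degree-adaptive clustering.}
1. Each vertex samples about $n^{\kappa}$ incident edges; edges at low-degree vertices are kept entirely. The message cost of this step is $\tilde{O}(\min\{m,n^{1+\kappa}\})$. Under $\mathsf{KT}_1$, a vertex needs no handshake to sample a neighbor's ID.
2. Low-degree vertices (degree $\le n^{\kappa}$) keep all their edges in $H$, so their distances are preserved exactly.
3. High-degree vertices have a sampled neighbourhood that hits a random set of centers of density about $n^{-\kappa}$. They join those clusters, which gives about $n^{1-\kappa}$ clusters.

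\textbf{Step 3: Miller--Peng--Xu decomposition on the cluster graph.} Run the low-diameter decomposition of Miller, Peng and Xu on the cluster graph induced by the sampled edges, with exponential shifts of parameter about $n^{-\lambda}$. This groups clusters into super-clusters of hop-radius $\tilde{O}(n^{\lambda})$. An edge is cut with probability $\tilde{O}(n^{-\lambda})$, which yields multiplicative stretch $\alpha=\tilde{O}(n^{\lambda})$.

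The additive term has two sources, each contributing an additive cost:
1. Super-clusters connect through inter-cluster edges. Keeping one edge per pair of adjacent super-clusters, with about $n^{1-\kappa-\lambda}$ super-cluster boundaries along a path, gives the $n^{1-\kappa-\lambda}$ term.
2. Unsampled edges between high-degree clusters are covered only approximately. This gives the $n^{1-2\kappa+\lambda}$ term: roughly $n^{1-\kappa}$ clusters, each crossing detoured at cost $n^{\lambda}$, with only a $n^{-\kappa}$ fraction needing detours.

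The round complexity is $\tilde{O}(\alpha+\beta+n^{\kappa})$. The MPX broadcast runs along cluster trees, whose depth is bounded by the radii $\alpha$ and $\beta$. Sampling at the $n^{\kappa}$ edges of a vertex costs $n^{\kappa}$ rounds.

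\textbf{Main obstacle.} The hardest step is the message bound. The decomposition is computed over the sampled edges, and the inter-cluster edges we must check or keep also depend on that sample. This makes the number of messages used to cover unsampled edges a correlated random variable.

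The first approach I would try is a deferred-decision argument. Expose the MPX shifts first, since they are independent of the sampling. Then argue that, conditioned on the shifts, the clustering outcome changes only on the low-probability event that a sampled edge is cut. Any remaining dependence would be handled with a coupling to a fresh independent sample, using a union bound over $\tilde{O}(1)$ repetitions. From there the expected number of messages is $\tilde{O}(n^{1+\kappa})$, and Markov's inequality gives the bound with probability $1-o(1)$.
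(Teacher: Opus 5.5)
\textbf{Gap: the correlation between sampling and MPX is not resolved.} You correctly identify the central difficulty: MPX runs on the sampled edges, so the unsampled outgoing edges of its parts are correlated with the partition. But the deferred-decision fix you propose would fail. Fixing the exponential shifts decouples nothing. MPX is a BFS over $J$, so the partition is still a deterministic function of $J$, and the event ``no interior vertex of $P$ has a sampled outgoing edge'' is exactly what selects $P$ as a part. A coupling to a fresh sample says nothing about the parts that actually arise.

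The paper's key idea is instead a union bound over \emph{all} candidate sets of bounded size (Lemma~\ref{lma:outgoingEdge}). Take a $j$-moderate $S$ (at most $n^{2\kappa}\log n/\Delta_j$ class-$j$ supernodes), a $k$-moderate $U$, and sampling rates $q_i=3n^{\kappa}\log n/\Delta_i$. The probability that $n^{\kappa}(\ln n+1)$ edges of $\partial_{\widehat{G}}(S,\Ccal^{(k)}\setminus U)$ all escape $J$ is at most $\exp(-3n^{2\kappa}\log n(\ln n+1)/\Delta_{\min\{j,k\}})$. This beats the at most $\exp(2n^{2\kappa}\log n(\ln n+1)/\Delta_{\min\{j,k\}})$ choices of $(S,U)$.

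This only works for bounded-size sets. That is why the algorithm closes only \emph{admissible} parts, conditions (C1) and (C2), using \textsf{OutDetect} with $f=\tilde{O}(n^{\kappa})$, and leaves all other parts open. The two additive terms come from counting these open parts, each of diameter $\tilde{O}(n^{\lambda})$:
\begin{itemize}
\item $\tilde{O}(n^{1-\kappa-2\lambda})$ parts have a boundary supernode of class $\Delta_i>n^{\kappa+\lambda}\log n$. This follows from the MPX boundary probability $2n^{-\lambda}$ and the $\tilde{O}(n/\Delta_i)$ clusters per class.
\item $\tilde{O}(n^{1-2\kappa})$ parts are non-moderate, by pigeonhole.
\end{itemize}
Your derivations of $n^{1-\kappa-\lambda}$ (``super-cluster boundaries along a path'') and $n^{1-2\kappa+\lambda}$ (``an $n^{-\kappa}$ fraction needing detours'') do not correspond to any valid argument.

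\textbf{Unsound framework and clustering.} Your partition framework does not give the spanner bound. If clusters have radius $O(\beta)$ and this is charged additively per crossing, a path crossing many clusters pays $\beta$ many times. In the paper's framework (Lemma~\ref{lma:spanner-from-partition}):
\begin{itemize}
\item every part has $F$-diameter at most $\alpha$, charged multiplicatively because each crossing uses at least one original edge;
\item closed parts have \emph{all} their outgoing $G$-edges in $F$;
\item only open parts are charged additively, since their total diameter is at most $\beta$; they are then merged via $f=1$ \textsf{OutDetect} calls.
\end{itemize}
Relatedly, the multiplicative stretch comes from part diameter, not from the cut probability.

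Your single-threshold (Baswana--Sen-type) clustering also cannot pay for closing parts. Closing requires adding every edge incident to boundary supernodes, and such a cluster can carry arbitrarily many edges. The paper's degree-adaptive clustering gives, for each class $\Delta_i=2^i$, $\tilde{O}(n/\Delta_i)$ clusters of size $O(\Delta_i)$ whose vertices all have degree at most $2\Delta_i$. So a class-$i$ boundary supernode has $O(\Delta_i^2)$ incident edges. With $\tilde{O}(n^{1-\lambda}/\Delta_i)$ boundary supernodes per class, the closing cost is $\sum_{i\le t}\tilde{O}(n^{1-\lambda}\Delta_i)=\tilde{O}(n^{1+\kappa})$.

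A minor point: the $n^{\kappa}$ round term comes from pipelining $\tilde{O}(n^{\kappa})$-bit outdetect labels, not from edge sampling, which takes a single round.
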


To obtain the MST from this spanner structure, one can use the technique of the message-efficient partwise aggregation by Haeupler, Hershkowitz,
and Wajc~\cite{HHW18} combined with the approach by Gmyr and Pandurangan~\cite{GP18} and Ghaffari and Kuhn~\cite{GK18}, which provides the following message-efficient algorithm.

\begin{corollary} \label{corol:MST}
  Let $0 \leq \lambda \leq \kappa \leq 1/2$ be arbitrary parameters. There exists a randomized \textsf{CONGEST-KT$_1$} algorithm for
  MST,
  which runs in $\tilde{O}(n^{\lambda}D_G +
    n^{1 - \kappa - \lambda} + n^{1 - 2\kappa + \lambda} + n^{1/2})$ rounds and uses $\tilde{O}(\min\{m, n^{1 + \kappa}\})$ messages. It succeeds with probability $1 - o(1)$.
\end{corollary}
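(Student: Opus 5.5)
We derive Corollary~\ref{corol:MST} from Theorem~\ref{thm:main} by running a Bor\r{u}vka-style MST algorithm in which every global communication step uses only the spanner $H$. Invoke Theorem~\ref{thm:main} with the given $(\kappa,\lambda)$ to obtain an $(\alpha,\beta)$-spanner $H$ with $\alpha=\tilde{O}(n^{\lambda})$ and $\beta=\tilde{O}(n^{1-\kappa-\lambda}+n^{1-2\kappa+\lambda})$. This costs $\tilde{O}(\min\{m,n^{1+\kappa}\})$ messages and $\tilde{O}(\alpha+\beta+n^{\kappa})$ rounds; since $\kappa\le 1/2$, the round bound is $\tilde{O}(n^{\lambda}+\beta+n^{1/2})$. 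The key consequence is $D_H\le \alpha D_G+\beta = \tilde{O}(n^{\lambda}D_G+n^{1-\kappa-\lambda}+n^{1-2\kappa+\lambda})$. The spanner has $\tilde{O}(n^{1+\kappa})$ edges, because its construction cost bounds its size. Hence any algorithm run on $H$ that sends $\tilde{O}(1)$ messages per edge of $H$ per round-batch costs only $\tilde{O}(n^{1+\kappa})$ messages per batch. If $m\le n^{1+\kappa}$, we simply take $H=G$.

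Next we follow the approach of \cite{GP18,GK18} built on \cite{KKT15}. There are $O(\log n)$ Bor\r{u}vka phases. In each phase, every current fragment must find its minimum-weight outgoing edge (MWOE) in $G$, not in $H$, without scanning all of $G$'s edges. The KT$_1$ technique of \cite{KKT15} does this: using fragment identifiers and neighbor IDs, each fragment tests for and locates a lightest outgoing edge through $O(\log n)$ rounds of binary search over weight ranges. Each step aggregates random XOR/parity sketches of the incident edge sets over the fragment, and vertices compute the sketches locally from their neighbor IDs. Therefore every MWOE search reduces to $\tilde{O}(1)$ \emph{partwise aggregation} operations over the fragments, each of $O(\log n)$-bit values, followed by broadcasting the result inside each fragment. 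Merging fragments and relabeling them reduces to the same primitives.

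The crux is to implement partwise aggregation over arbitrary vertex-disjoint connected parts using $\tilde{O}(n^{1/2}+D_H)$ rounds and $\tilde{O}(|E(H)|)$ messages, even though fragments can have large diameter. For this we invoke the message-efficient partwise aggregation of Haeupler, Hershkowitz, and Wajc~\cite{HHW18} on the communication graph $H$. Their algorithm solves partwise aggregation on a general graph in $\tilde{O}(D_H+\sqrt{n})$ rounds with $\tilde{O}(|E(H)|)$ messages; equivalently, via the standard $\sqrt{n}$ split of fragments into small and large ones, small fragments aggregate along their own MST edges and large fragments through a BFS tree of $H$. One subtlety needs care: the fragments are connected in $G$ through MST edges, and every MST edge chosen so far is a discovered edge, so we add all chosen MST edges to the communication graph. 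This adds only $n-1$ edges, so the part structure is connected in the communication graph $H\cup T$.

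Putting it together: there are $O(\log n)$ phases, each using $\tilde{O}(1)$ aggregations, each costing $\tilde{O}(D_H+\sqrt{n})$ rounds and $\tilde{O}(n^{1+\kappa})$ messages (or $\tilde{O}(m)$). Adding the spanner construction cost gives $\tilde{O}(n^{\lambda}D_G+n^{1-\kappa-\lambda}+n^{1-2\kappa+\lambda}+n^{1/2})$ rounds and $\tilde{O}(\min\{m,n^{1+\kappa}\})$ messages. The failure probability is $o(1)$ from Theorem~\ref{thm:main} plus $1/\mathrm{poly}(n)$ from the sketching and from \cite{HHW18}, combined by a union bound. The main obstacle is verifying that the \cite{HHW18} aggregation stays within $\tilde{O}(|E(H)|)$ messages per invocation rather than $\tilde{O}(m)$. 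If that is unclear, the fallback is the simpler $\sqrt{n}$-threshold scheme: small fragments use their internal tree edges, and large fragments pipeline over a single BFS tree of $H$ ($\le\sqrt{n}$ values with $O(n)$ tree edges, so $\tilde{O}(n^{3/2})$ messages). That fallback costs at most $\tilde{O}(n^{1+\kappa})$ only when $\kappa\ge 1/2$, so for smaller $\kappa$ we need the \cite{HHW18} bound.
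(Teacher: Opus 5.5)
Your proposal is correct and follows essentially the same route as the paper. Theorem~\ref{thm:main} supplies the spanner, whose size is bounded by the message cost of its construction. Lemma~\ref{lma:mst-from-spanner} then turns it into an MST in $\tilde{O}(\alpha D_G+\beta+\sqrt{n})$ rounds and $\tilde{O}(M)$ messages, via Bor\r{u}vka phases with KKT-style \textsf{OutDetect} binary search for minimum-weight outgoing edges and partwise aggregation through the message-efficient shortcuts of~\cite{HHW18} over a BFS tree of the spanner, with small parts handled locally.

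As in the paper, the $\tilde{O}(M)$ message bound rests on Theorem~1.2 of~\cite{HHW18} as a black box. Calling the naive $\sqrt{n}$-split ``equivalent'' to that algorithm is inaccurate, since (as you yourself observe at the end) that scheme alone costs $\tilde{O}(n^{3/2})$ messages.
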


Focusing on the regime of (near) round-optimality with respect to $n$ (i.e., $\tilde{\Theta}(n^{1/2})$ rounds), the setting $(\kappa, \lambda) = (1/3, 1/6)$ yields round complexity $\tilde{O}(n^{1/2})$ and message complexity $\tilde{O}(n^{4/3})$ for $D_G = O(n^{1/3})$. More generally, supposing $D_G \leq n^{\delta}$ for a known constant $\delta$ with $\delta = 1/2 - \Omega(1)$, the setting $(\kappa, \lambda) = (\delta,\, 1/2 - \delta)$ yields round complexity $\tilde{O}(n^{1/2})$ and message complexity $\tilde{O}(n^{1 + \delta}) = \tilde{O}(n^{3/2 - \Omega(1)})$. In either case our algorithm is round-optimal while improving the best known message bound of $\tilde{O}(n^{3/2})$ by a polynomial factor.

Moving away from $\tilde{\Theta}(n^{1/2})$ dependence on $n$, Corollary~\ref{corol:MST} breaks the quadratic time--message barrier for almost the entire range of the diameter $D_G$.
The case $D_G = \Omega(n)$ is trivially excluded, since the $\Omega(D_G)$-round and $\Omega(n)$-message lower bounds
already force $\text{\#rounds} \cdot \text{\#messages} = \Omega(n^2)$. We therefore focus on $1/3 < \delta = 1 - \Omega(1)$ (the complementary regime $\delta \leq 1/3$ is already handled round-optimally above, with the best possible message complexity $\tilde{O}(n^{4/3})$). Then setting $(\kappa, \lambda) =
  \left(\frac{1-\delta}{2}, \frac{1-\delta}{4}\right)$ balances the first three terms of the round complexity at the common value $n^{(1+3\delta)/4}$, which exceeds $n^{1/2}$ since $\delta > 1/3$. Hence the round complexity is $\tilde{O}(n^{(1+3\delta)/4})$ and the
message complexity is $\tilde{O}(n^{(3-\delta)/2})$, yielding the product $\tilde{O}(n^{(7+\delta)/4}) = \tilde{O}(n^{2 - (1-\delta)/4})$, a polynomial factor below $n^2$.

Theorem~\ref{thm:main} also yields message-efficient broadcast, spanning tree construction, and leader
election algorithms: it suffices to construct the $(\alpha, \beta)$-spanner and then broadcast over it,
which takes $\tilde{O}(\alpha D_G + \beta)$ additional rounds.

\begin{corollary} \label{corol:bcast}
  Let $0 \leq \lambda \leq \kappa \leq 1/2$ be arbitrary parameters. There exist randomized
  \textsf{CONGEST-KT$_1$} algorithms for broadcast, spanning tree, and leader election that run in
  $\tilde{O}(n^{\lambda}D_G + n^{1 - \kappa - \lambda} + n^{1 - 2\kappa + \lambda} + n^{\kappa})$
  rounds and use $\tilde{O}(\min\{m, n^{1 + \kappa}\})$ messages. They succeed with probability $1 - o(1)$.
\end{corollary}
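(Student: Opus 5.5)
The plan is to derive Corollary~\ref{corol:bcast} directly from Theorem~\ref{thm:main}. First invoke the theorem with the given $(\kappa,\lambda)$ to obtain, with probability $1-o(1)$, an $(\alpha,\beta)$-spanner $H$ with $\alpha=\tilde{O}(n^{\lambda})$ and $\beta=\tilde{O}(n^{1-\kappa-\lambda}+n^{1-2\kappa+\lambda})$. This step costs $\tilde{O}(\min\{m,n^{1+\kappa}\})$ messages and $\tilde{O}(\alpha+\beta+n^{\kappa})$ rounds. At the end, each vertex knows which of its incident edges belong to $H$. Since $H$ was built within the stated message budget, and each edge of $H$ must have been touched by at least one message, $|E(H)|=\tilde{O}(\min\{m,n^{1+\kappa}\})$. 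If the construction does not make this immediate, I will argue it from the message bound, or from an explicit size bound inside the construction.

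Next, bound the diameter of $H$. For any $u,v$ we have $\Dist_H(u,v)\le \alpha\Dist_G(u,v)+\beta\le \alpha D_G+\beta$, so $D_H=\tilde{O}(n^{\lambda}D_G+n^{1-\kappa-\lambda}+n^{1-2\kappa+\lambda})$. Then run the standard primitives on $H$ only.

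\emph{Broadcast.} The source floods over $H$. Each vertex forwards the message once over each of its $H$-edges, which takes $D_H$ rounds and $O(|E(H)|)$ messages.

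\emph{Leader election.} Every vertex draws a random rank, or uses its ID, and we run min-flooding over $H$ for $D_H$ rounds. A vertex forwards only when its current minimum improves. With random ranks, the expected number of improvements per vertex is $O(\log n)$, so the message count is $\tilde{O}(|E(H)|)$. Alternatively, to avoid needing $D_H$ in advance, I would use a doubling estimate of the round budget, which only adds a logarithmic factor.

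\emph{Spanning tree.} After the leader is elected, it broadcasts over $H$ to build a BFS tree of $H$, with each vertex recording the neighbor from which it first received the message as its parent. This is a spanning tree of $G$ and costs the same as one broadcast.

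Adding the two phases gives $\tilde{O}(\alpha+\beta+n^{\kappa}+\alpha D_G+\beta)$ rounds, which equals the claimed bound since $\alpha\le \alpha D_G$. The message count is $\tilde{O}(\min\{m,n^{1+\kappa}\})$, and the overall success probability is that of Theorem~\ref{thm:main}, namely $1-o(1)$.

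The main obstacle is knowing when to stop without knowing $D_G$, and with it $D_H$. For broadcast this is harmless, since termination is local. For leader election, I would first try the doubling estimate $T=2^i$ combined with a convergecast check over the tentative BFS tree. The check is whether any vertex has a neighbor in $H$ with a different minimum. Each trial uses $\tilde{O}(|E(H)|)$ messages, and there are $O(\log n)$ trials, so the bounds hold up to polylogarithmic factors.
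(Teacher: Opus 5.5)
Your proposal is correct and follows the paper's route. The paper's one-line justification is exactly this: build the spanner of Theorem~\ref{thm:main}, observe that its diameter is at most $\alpha D_G + \beta$, and run flooding-based broadcast over it. Your extra details fill in what the paper leaves implicit: bounding $|E(H)|$ by the message budget, random-rank min-flooding with $O(\log n)$ expected improvements per vertex for leader election, and doubling with a convergecast check for termination when $D_G$ is unknown.
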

For $\lambda = \kappa/2$, $\kappa \leq 1/3$, and $D_G = O(n^{1 - 2\kappa})$,
the corollary gives $\tilde{O}(n^{1 - 3\kappa/2})$ rounds with $\tilde{O}(\min\{m, n^{1 + \kappa}\})$ messages. This improves upon the corresponding algorithms of~\cite{GP18,GK18},
which run in $\tilde{O}(n^{1 - \kappa})$ rounds under the same message bound. In addition, just as for MST (Corollary~\ref{corol:MST}), these algorithms break the quadratic barrier over almost the entire range of the diameter with the same complexity as the MST.

Although attaining a \emph{singularly optimal} MST algorithm, one that is simultaneously optimal in both round and message complexity, still remains an intriguing open problem in the \textsf{CONGEST-KT$_1$} model, our result marks a significant step forward by breaking a barrier that has constrained all prior algorithms.

\subsection{Technical Outline}
\label{subsec:outline}

\paragraph{The $(\alpha, \beta)$-Partition Framework}
To highlight the novelty of our algorithm, we first give an overview of the approach
commonly used in prior work~\cite{GP18,GK18}, attaining the round complexity
$\tilde{O}(n^{1/2} + D_G)$ and the message complexity $\tilde{O}(n^{3/2})$. The central idea
in these results is to construct a $(\tilde{O}(1), \tilde{O}(\sqrt{n}))$-spanner with
$\tilde{O}(n^{3/2})$ edges using $\tilde{O}(n^{3/2})$ messages. At the core of this approach
lies a Baswana--Sen-type spanner construction~\cite{BS07}.

Vertices of degree at most $\sqrt{n}$ can afford to include all their incident edges
in the spanner, since the total number of such edges is bounded by
$\tilde{O}(n^{3/2})$.
Therefore, the main difficulty arises from vertices whose degree is at least
$\sqrt{n}$.
To handle these high-degree vertices, prior works sample each vertex
independently with probability $\tilde{\Theta}(1/\sqrt{n})$ and use the sampled
vertices as cluster centers.
High-degree vertices are then clustered around these centers.
To maintain global connectivity, outgoing edges of clusters are detected following
the technique of graph sketches~\cite{AGM12}, and cluster merging by detected edges
is repeatedly performed. As a result of these merging steps, one obtains a spanning
subgraph of diameter $\tilde{O}(\sqrt{n})$, which forms a $(\tilde{O}(1), \tilde{O}(\sqrt{n}))$-spanner.

To generalize this construction, we first introduce the notion of an
\emph{$(\alpha, \beta)$-partition} (Definition~\ref{dfn:abc-partition}).
An $(\alpha, \beta)$-partition of a graph $H$ is a pair $(\Pcal, F)$
of a vertex partition $\Pcal$ and an edge set $F \subseteq E(H)$
such that (1) every part $P \in \Pcal$ has diameter at most $\alpha$ in the subgraph restricted to $F$-edges,
and (2) the sum of the diameters over all parts that have boundary edges not captured by $F$
(called \emph{open} parts) is bounded by $\beta$.
In this context, the prior approach corresponds to constructing a $(\tilde{O}(1),
  \tilde{O}(\sqrt{n}))$-partition.
Each low-degree vertex is a closed part consisting of a single vertex, and
the clusters formed by the Baswana--Sen-type construction are regarded as open parts of diameter $O(1)$,
whose diameter sum is $\tilde{O}(\sqrt{n})$. Then, iterative merging of open parts can be seen
as the process of transforming a given $(\tilde{O}(1), \tilde{O}(\sqrt{n}))$-partition to a $(\tilde{O}(1),
  \tilde{O}(\sqrt{n}))$-spanner. Importantly, this process is easily generalized to an arbitrary
$(\alpha, \beta)$-partition (Lemma~\ref{lma:spanner-from-partition}). That is, an $(\alpha,\beta)$-partition
$(\Pcal, F)$ immediately yields an $(O(\alpha),O(\beta))$-spanner of $|F| + \tilde{O}(n)$ edges via a procedure that
consumes only $\tilde{O}(n)$ additional messages. Hence the algorithmic challenge reduces to constructing an
$(\alpha,\beta)$-partition efficiently.

\paragraph{Starting Point: Low-Diameter Decomposition}
Let $0 \leq \lambda \leq \kappa \leq 1/2$. Our target is a $(\tilde{O}(n^{\lambda}),\, \tilde{O}(n^{1-\kappa-\lambda} + n^{1-2\kappa+\lambda}))$-partition
constructed with only $\tilde{O}(n^{1+\kappa})$ messages. By the reduction above this yields Theorem~\ref{thm:main}. The key idea of our algorithm is to use the $d$-low-diameter
decomposition ($d$-LDD) algorithm of Miller, Peng, and Xu~\cite{MPX13}, called $\textsf{MPX}(d)$ hereafter. It partitions
the input graph into connected components of diameter $O(d \log n)$ with $O(n/d)$ boundary vertices
(in expectation)~\footnote{
  In the original proof of \textsf{MPX}, the boundary is defined as a set of edges, but not vertices (i.e.,
  the probability that an edge crosses two different parts is $O(d^{-1})$). However, it is straightforward
  to modify that claim in the one considered here (see Appendix~\ref{appendix:MPX}).
}.
More precisely, the starting point of our algorithm is stated as follows:
Each vertex $u$ independently samples each of its incident edges with probability $n^{\kappa}/\Deg_G(u)$,
where $\Deg_G(u)$ is the degree of $u$ in $G$. Let $J$ denote the sampled edge set.
We then run $\textsf{MPX}(n^{\lambda})$ on the spanning subgraph $(V(G), J)$ and
obtain a partition $\Pcal$ of $V(G)$ such that each $P \in \Pcal$ induces a subgraph of
diameter $\tilde{O}(n^{\lambda})$.
If we can close all but $\tilde{O}(n^{1-\kappa-2\lambda} + n^{1-2\kappa})$ parts
by adding at most $\tilde{O}(n^{1+\kappa})$ edges to $J$, we obtain
the desired $(\tilde{O}(n^{\lambda}), \tilde{O}(n^{1-\kappa-\lambda} + n^{1-2\kappa+\lambda}))$-partition. However, making the parts closed turns out to be
highly non-trivial, mainly due to the following two reasons.
\begin{itemize}
  \item First, to make a part $P$ closed, one must include all the incident edges of
        its boundary vertices into the output edge set $J$.
        If a boundary vertex has high degree, this can contribute many edges to $J$,
        potentially causing the total size of $J$ to exceed the desired $\tilde{O}(n^{1+\kappa})$ message bound.
  \item Second, a part $P$ is required to be closed with respect to the edge set $E(G)$ of the \emph{original} graph $G$,
        not merely with respect to the sampled edge set $J$. That is, to close a part $P$, we must account for all
        edges incident to $P$ in $G$. Intuitively, one might expect the \emph{non-boundary} vertices of $P$ to emit only few non-sampled outgoing edges, for the following reason: by definition, a non-boundary vertex of $P$ has no outgoing edge, i.e., no edge of $J$ leaving $P$. Hence all of its outgoing edges in $G$ are non-sampled. Since each edge incident to a vertex $u$ is sampled into $J$ with probability $n^{\kappa}/\Deg_G(u)$, a vertex emitting many outgoing edges in $G$ would likely have had at least one of them sampled, thereby becoming a boundary vertex. One would therefore expect each non-boundary vertex of $P$ to emit only few outgoing edges in $G$ in the first place.

        This intuition is, however, incorrect in general,
        because the partition $\Pcal$ output by \textsf{MPX} and the sampled edge set $J$ are correlated:
        the algorithm uses $J$ to determine which vertices become boundary vertices, and therefore the event that a vertex subset
        $P \subseteq V(G)$ becomes a part and the event that $P$'s outgoing edges are sparse in $J$ are not independent.
\end{itemize}

\paragraph{Our Approach: Degree-Adaptive Clustering}
To overcome both challenges, we propose a novel technique called \emph{degree-adaptive clustering}, which is employed in the preprocessing step
of our algorithm.
For each degree class $\Delta_i = 2^i$, this clustering produces $\tilde{O}(n/\Delta_i)$ clusters of diameter
$O(\log n)$, each consisting of at most $O(\Delta_i)$ vertices whose degrees in $G$ are at most $\Delta_i$. Contracting each cluster to a supernode yields a multigraph $\widehat{G}$. Then, a vertex lying in a class-$i$ cluster samples each incident edge with probability $\Theta(n^{\kappa}\log n/\Delta_i)$. All sampled edges, which are added to the final output set $F$, induce a subgraph $\widetilde{G}$ of $\widehat{G}$. The key technical ingredient of our algorithm is that running \textsf{MPX} on $\widetilde{G}$ nicely resolves the aforementioned two challenges.

The clustering itself is simple. We process the $i$-th degree class (with $\Delta_i = 2^i$), which consists of the vertices of degree within $[2^i, 2^{i+1}]$, from the largest $i$ to the smallest. In processing the $i$-th class, each remaining vertex (i.e., a vertex not clustered yet) is independently sampled as a cluster center with probability $\tilde{\Theta}(1/\Delta_i)$, and every non-sampled vertex in the $i$-th class tries to attach itself to a cluster. If it has a sampled neighbor, it joins that center. Otherwise, it probes $O(\log n)$ of its randomly chosen neighbors and tries to find one that was already clustered in an earlier (higher-degree) iteration. If it is actually found, the vertex joins its cluster. Since a class-$i$ vertex has degree at least $\Delta_i$, one of these two cases always applies with high probability, and hence all such vertices are clustered. This produces $\tilde{O}(n/\Delta_i)$ clusters of diameter $O(\log n)$ per class.
The clusters produced this way may be much larger than $\Delta_i$, whereas our analysis requires each cluster to have size $O(\Delta_i)$. A post-processing step therefore splits every oversized cluster into subclusters of size $\Theta(\Delta_i)$, using the method of~\cite{IKNS22}. Although this splitting may cause different subclusters to overlap in a few shared vertices, one can handle it by replacing each shared vertex with several virtual copies, all simulated by the original vertex. The details are given in Section~\ref{subsec:balancing}.

\paragraph{Resolving the Two Challenges}
Employing the degree-adaptive clustering technique, our algorithm resolves the two challenges as follows.
\begin{itemize}
  \item A concern regarding the first challenge is that a boundary cluster in class $i$ is a supernode of degree $\tilde{O}(\Delta_i^2)$ in $\widehat{G}$
        (it consists of $\tilde{O}(\Delta_i)$ original vertices, each of degree at most $\Delta_i$), and thus a
        high-degree boundary supernode alone could contribute too many edges. We resolve this using the property
        of \textsf{MPX} that boundary vertices are spread out uniformly: when computing the $n^{\lambda}$-LDD, each
        supernode becomes a boundary vertex with probability $O(n^{-\lambda})$ (see Appendix~\ref{appendix:MPX}). Intuitively, our algorithm resolves the first challenge by combining this property with the fact that high-degree supernodes form only a small fraction of all supernodes in $\widetilde{G}$.

        First, let $\Ccal^H$ be the set of \emph{high-degree} supernodes, namely those in classes with $\Delta_i > n^{\kappa+\lambda}$.
        Since there are $\tilde{O}(n/n^{\kappa+\lambda}) = \tilde{O}(n^{1-\kappa-\lambda})$ supernodes in $\Ccal^{H}$, the expected number that become
        boundary vertices is $\tilde{O}(n^{1-\kappa-\lambda} \cdot n^{-\lambda}) = \tilde{O}(n^{1-\kappa-2\lambda})$. We therefore leave open
        every part that has a supernode in $\Ccal^H$ on its boundary: there are only $\tilde{O}(n^{1-\kappa-2\lambda})$ such parts, each of diameter
        $\tilde{O}(n^{\lambda})$. Hence they contribute only $\tilde{O}(n^{1-\kappa-2\lambda} \cdot n^{\lambda}) = \tilde{O}(n^{1-\kappa-\lambda})$ to the total open diameter.

        For every remaining part, all boundary supernodes lie in classes with $\Delta_i \leq n^{\kappa+\lambda}$. The number
        of edges added is then controlled \emph{not} by any per-part quantity but by the \emph{total} number of
        boundary supernodes in each class, which the same \textsf{MPX} property keeps small: the expected number of
        boundary supernodes in class $i$ is $\tilde{O}(n^{1 - \lambda}/\Delta_i)$
        (Lemma~\ref{lma:highsize}). Summing the $\tilde{O}(\Delta_i^2)$ incident edges over all
        boundary supernodes, class by class, gives
        \begin{align*}
          \sum_{i:\,\Delta_i \leq n^{\kappa+\lambda}} \tilde{O}\!\Bigl(\tfrac{n^{1-\lambda}}{\Delta_i}\Bigr) \cdot \tilde{O}(\Delta_i^2)
          = \sum_{i:\,\Delta_i \leq n^{\kappa+\lambda}} \tilde{O}(n^{1-\lambda}\Delta_i) = \tilde{O}(n^{1-\lambda} \cdot n^{\kappa+\lambda}) = \tilde{O}(n^{1+\kappa}),
        \end{align*}
        which stays within budget.
  \item Consider the second challenge. Let $\Pcal$ be the partition of $\widetilde{G}$ produced by
        \textsf{MPX}. For each $P \in \Pcal$, we denote by $P_M$ the set of its \emph{non-boundary} supernodes. Take a part $P \in \Pcal$ that is not left open in the resolution of the first
        challenge. Since the edges incident to boundary vertices of $P$ have already been added to $F$, it suffices to detect the outgoing edges of $P_M$ not crossing to the boundary vertices $P_B := P \setminus P_M$ in $P$. By
        definition, every vertex in $P_M$ has all of its $\widetilde{G}$-edges inside $P$, and thus every such outgoing edge is \emph{non-sampled} (absent
        from $\widetilde{G}$). These are the ``hidden'' edges we must find (see Figure~\ref{fig:closepart1}).
        The primary tool for detecting them is the outdetect labeling scheme~\cite{IEWM25}
        (Theorem~\ref{thm:outdetect}). Specifically, if $P_M$ has $f$ outgoing edges reaching the outside of $P$ in the current graph $\widehat{G} - F$,
        then all of them can be recovered by one-shot aggregation of $\tilde{O}(f)$-bit input values along the spanning tree of $P$. This takes $\tilde{O}(n^{\lambda} + f)$ rounds and $\tilde{O}(f |V(G[P])|)$ messages, where $\tilde{O}(n^{\lambda})$ is the
        diameter upper bound of the subgraph of $(V(G), F)$ induced by $P$, and $|V(G[P])|$ denotes the number of \emph{original} vertices of $G$ in that subgraph. Consequently, for the parts $P$ such that $P_M$ emits only $\tilde{O}(n^{\kappa})$ outgoing edges in $\widehat{G} - F$, all of these edges can be detected
        using
        \begin{align*}
          \sum_{P} \tilde{O}(n^{\kappa} |V(G[P])|) = \tilde{O}\Bigl(n^{\kappa}\sum_{P}|V(G[P])|\Bigr) = \tilde{O}(n^{\kappa}\cdot n) = \tilde{O}(n^{1+\kappa})
        \end{align*}
        messages in total, since the parts partition the $n$ original vertices and hence $\sum_{P}|V(G[P])| \leq n$.

        \begin{figure}[t]
          \centering
          \begin{subfigure}[b]{0.48\textwidth}
            \centering
            \includegraphics[width=0.7\textwidth]{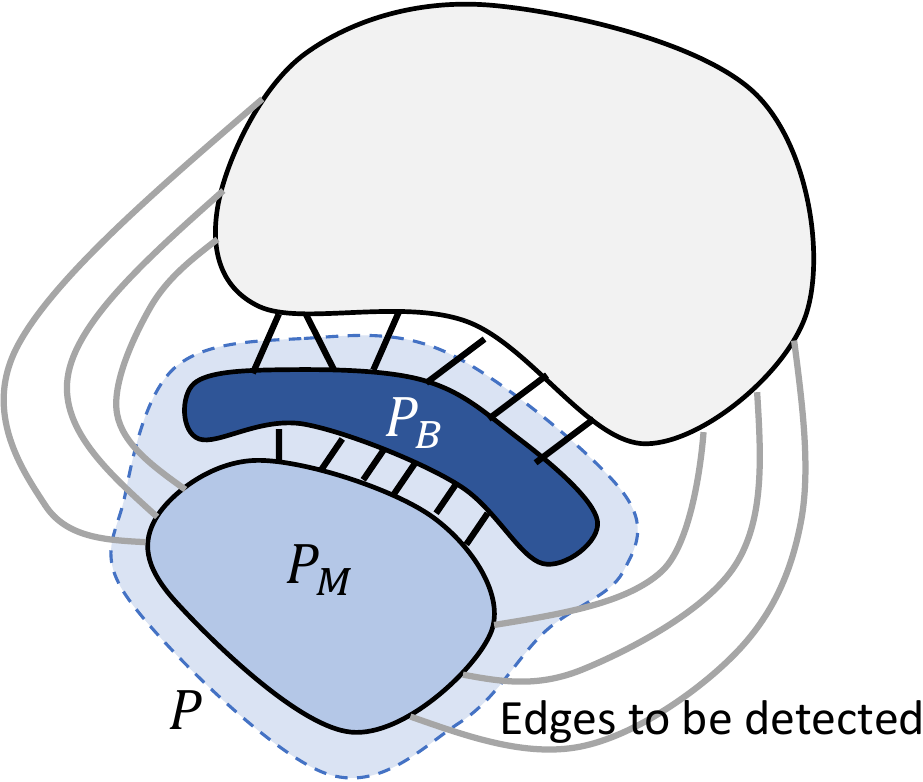}
            \caption{The boundary supernodes $P_B$ and non-boundary supernodes $P_M$ of a part $P$, together with the non-sampled (hidden) outgoing edges of $P_M$.}
            \label{fig:closepart1}
          \end{subfigure}
          \hfill
          \begin{subfigure}[b]{0.48\textwidth}
            \centering
            \includegraphics[width=0.9\textwidth]{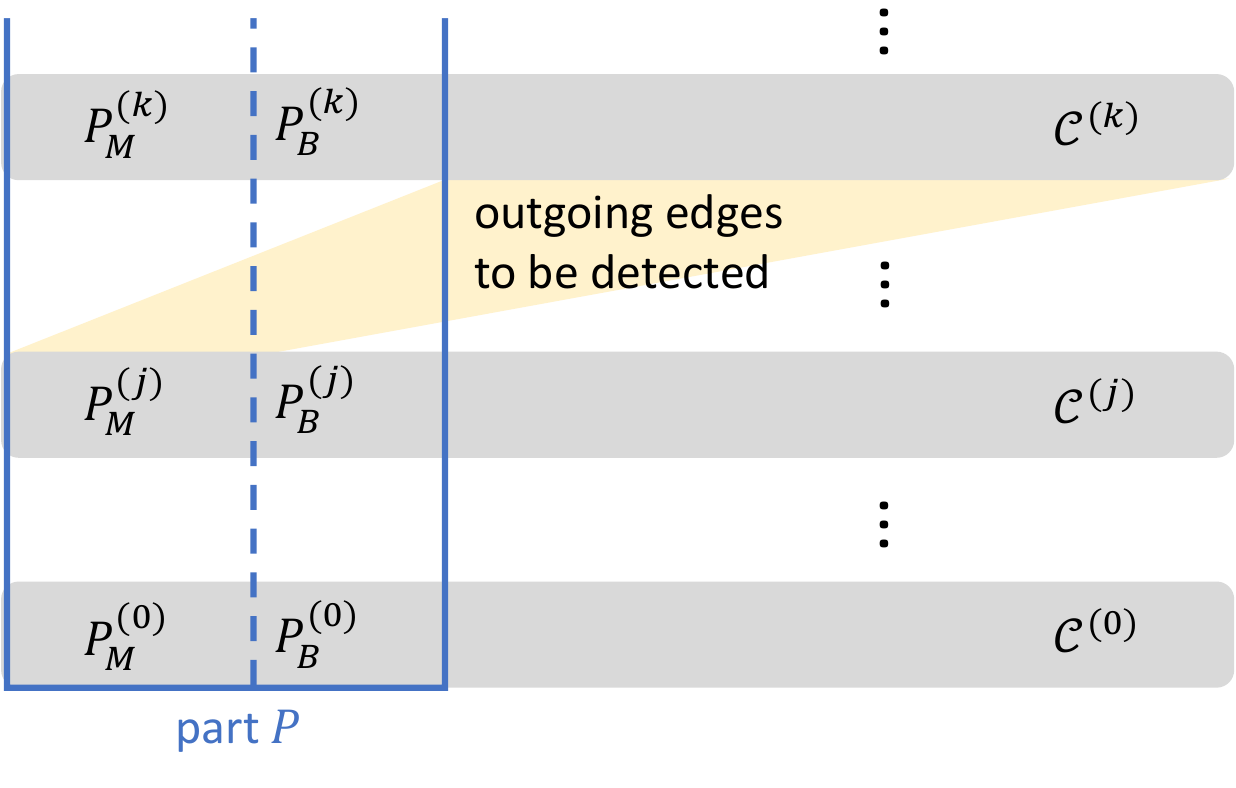}
            \caption{Applying Lemma~\ref{lma:outgoingEdge}: with $S = P_M^{(j)}$ and $U = P^{(k)}$. The}
            \label{fig:closepart2}
          \end{subfigure}
          \caption{Closing an admissible part $P$ in the second challenge.}
          \label{fig:closepart}
        \end{figure}

        It thus remains to bound, for each part we attempt to close, the number of non-sampled outgoing edges from $P_M$ to $V(\widehat{G}) \setminus P_B$ by $\tilde{O}(n^{\kappa})$. Unfortunately, this bound does not hold for all such parts, and thus our algorithm adopts the compromise of closing only \emph{admissible} ones: we call a set of class-$i$ supernodes \emph{$i$-moderate} if it
        has at most $\tilde{O}(n^{2\kappa}/\Delta_i)$ elements.
        Writing $X^{(i)}$ for the set of class-$i$ supernodes in a
        supernode set $X \subseteq V(\widehat{G})$, we say that a part $P$ is admissible if $P^{(i)}$
        is $i$-moderate for every class $i$ (this is condition~(C2) in Section~\ref{sec:partition}).\footnote{In the formal definition of
          Section~\ref{sec:partition}, a part is \emph{admissible} if it satisfies the two conditions~(C1) and~(C2). Here, condition~(C1),
          namely $P_B \cap \Ccal^H = \emptyset$, is precisely the requirement that $P$ is not left open in the resolution of the first
          challenge. Among such parts (which are considered in this context), the admissible ones are exactly
          those additionally satisfying condition~(C2).}

        The key technical lemma for closing all admissible parts is  Lemma~\ref{lma:outgoingEdge}, which intuitively states that a moderate set cannot have many outgoing edges that all escape the sampling. A more precise statement is as follows:
        with high probability, for any pair of a $j$-moderate set $S$ and a $k$-moderate set $U$ admitting more than $n^{\kappa}(\ln n + 1)$ edges crossing between $S$ and $\Ccal^{(k)} \setminus U$, at least one of them is sampled.Fortunately, this lemma is proved by a relatively standard union-bound argument, and is sufficient to bound the
        number of outgoing edges from $P_M$ to $V(\widehat{G}) \setminus P_B$: for each pair of classes $(j, k)$, we take $S = P_M^{(j)}$ (the non-boundary class-$j$
        supernodes of $P$) and $U = P^{(k)}$ (its class-$k$ supernodes). The admissibility of $P$ makes $S$ $j$-moderate and $U$ $k$-moderate, and no edge from
        $S$ to the class-$k$ supernodes outside $P$ is sampled. The lemma therefore bounds these edges by $\tilde{O}(n^{\kappa})$ per class pair, hence by $\tilde{O}(n^{\kappa})$ in total (see Figure~\ref{fig:closepart2}).

        Finally, the parts violating admissibility are few: if a part is non-admissible, for some $i$, it contains $\tilde{\Omega}(n^{2\kappa}/\Delta_i)$ elements in $\Ccal^{i}$. It implies that only $\tilde{O}(n^{1-2\kappa})$ parts can be non-admissible (see the bound on
        $|\Pcal_2|$ in the proof of Lemma~\ref{lma:opendiameter}). We simply leave these open, which adds a
        further $\tilde{O}(n^{1-2\kappa} \cdot n^{\lambda}) = \tilde{O}(n^{1-2\kappa+\lambda})$ to the total open diameter.
\end{itemize}

\subsection{Related Work}

\paragraph{Distributed MST in the \textsf{CONGEST} model.}
There is a long line of research on distributed MST in the \textsf{CONGEST} model.
Starting from the classical GHS algorithm~\cite{GHS83} with $O(n\log n)$ rounds
and $O(m + n\log n)$ messages, subsequent work~\cite{Awerbuch87,GKP98,KP98} progressively
reduced the round complexity to the near-optimal $O(D_G + \sqrt{n}\log^* n)$ bound, matching
the $\tilde\Omega(D_G + \sqrt{n})$ lower bound of~\cite{PR00,SHKKNPPW12} up to polylogarithmic factors.
In the \textsf{CONGEST-KT$_0$} model, simultaneous near-optimality in both rounds
($\tilde{O}(D_G + \sqrt{n})$) and messages ($\tilde{O}(m)$) was achieved by
Pandurangan, Robinson, and Scquizzato~\cite{PRS17}. While this algorithm is randomized, a deterministic
algorithm with the same running time and message complexity was proposed by
Elkin~\cite{Elkin20}. Haeupler, Hershkowitz, and Wajc~\cite{HHW18} also presented a round- and
message-optimal deterministic algorithm following the \emph{low-congestion shortcut} framework of Ghaffari
and Haeupler~\cite{GH16}. That approach allows one to extend the MST result to other problems such as
approximate minimum-cut~\cite{GH16} and single-source shortest path~\cite{ZGYHS22}.
In \textsf{CONGEST-KT$_1$}, the $\tilde{\Omega}(m)$ message lower bound can be circumvented~\cite{KKT15,GK18,GP18}. These results have been already discussed in Section~\ref{subsec:background}.

Mashreghi and King~\cite{MK21} extended the $o(m)$-message paradigm
to the \emph{asynchronous} model, achieving $O(n^{3/2}\log^{3/2} n)$ messages for both MST and broadcast.
Dufoulon, Kutten, Moses~Jr., Pandurangan, and Peleg~\cite{DKMPP22} later gave an almost singularly optimal
asynchronous MST algorithm with $\tilde{O}(m)$ messages
and $\tilde{O}(D_G^{1+\varepsilon} + \sqrt{n})$ rounds.

\paragraph{$o(m)$ messages for other problems and models.}
Sublinear-message computation has been studied beyond MST.
Pai, Pandurangan, Pemmaraju, Riaz, and Robinson~\cite{PPPRR17}
showed that a $2$-ruling set can be computed with $O(n\log^2 n)$ messages,
while computing MIS requires $\Omega(n^2)$.
Pai, Pandurangan, Pemmaraju, and Robinson~\cite{PPPR21} further showed that $(\Delta+1)$-coloring
admits $\tilde{O}(n^{3/2})$-message algorithms using non-comparison-based
randomized techniques,
whereas comparison-based algorithms still require $\Omega(n^2)$ messages.
For graph optimization problems (MVC, MDS, MaxIS),
Dufoulon, Pai, Pandurangan, Pemmaraju, and Robinson~\cite{DPPPR24} proved near-cubic message lower bounds
and separated the complexity of exact from approximate computation.
Very recently, Dufoulon, Pai, Pandurangan, Pemmaraju, and Robinson~\cite{DPPPR25} extended the
message-complexity viewpoint to APSP and related problems.
Censor-Hillel, Haeupler, Kelner, and Maymounkov~\cite{CHKM12} showed that in the \textsf{LOCAL} model,
any algorithm can be transformed to use $\tilde{O}(n)$ messages only with logarithmic overhead in round
complexity. A similar result with no time overhead was also proposed by Bitton, Emek, Izumi, and
Kutten~\cite{BEIK19}.

\subsection{Roadmap}
Section~\ref{sec:prelim} introduces necessary notations and terminologies, and recalls the tools used
in our algorithm.
Section~\ref{sec:MST} introduces the $(\alpha, \beta)$-partition framework and
reduces MST construction to spanner construction.
Section~\ref{sec:partition} presents the efficient distributed construction of an $(\alpha,\beta)$-partition,
which is the main algorithmic contribution of this paper.
Section~\ref{sec:conclude} concludes the paper.

\section{Preliminaries}
\label{sec:prelim}

\subsection{Notations and Terminologies}
Throughout this paper, we denote the vertex set and the edge set of a graph $H$ by $V_{H}$ and
$E_{H}$, respectively. The notation $H' \subseteq H$ means that $H'$ is a subgraph of $H$.
Given a vertex subset $X \subseteq V_H$, let $H[X]$ be the subgraph induced by $X$.
For any edge subset $X \subseteq E(H)$, $H - X$ denotes the graph obtained from $H$ by removing all edges in $X$.
Similarly, for any vertex subset $X \subseteq V(H)$, $H - X$ denotes the graph obtained from $H$ by removing all vertices in $X$ and their incident edges.
For two vertex subsets $X, Y \subseteq V(H)$, let $\partial_{H}(X, Y)$ be the set of edges crossing
between $X$ and $Y$. We also define $\partial_{H}(X)$ as
$\partial_H(X) = \partial_H(X, V(H) \setminus X)$. If either $X = \emptyset$ or $Y = \emptyset$ holds, $\partial_H(X, Y)$ is defined
as $\partial_H(X,Y) = \emptyset$.
Edges in $\partial_H(X)$ are called \emph{outgoing edges} of $X$ in $H$.
We also define $I_{H}(X)$ as the set of edges in $H$ incident to a vertex in $X$.
When $X$ is a singleton $X = \{u\}$, we write $\partial_H(u)$ instead of $\partial_H(\{u\})$, and $I_H(u)$ instead of $I_H(\{u\})$.
For a vertex $u$ in $H$, we denote the set of neighbors of $u$ by
$N_{H}(u) = \{v \in V(H) \mid (u,v) \in E(H) \}$,
and we denote the degree of $u$ by $\Deg_{H}(u)$.
For a vertex subset $X \subseteq V(H)$, we define $\Deg_H(X) = \sum_{u \in X} \Deg_H(u)$.
For any graph $H$ and two vertices $u, v \in V(H)$, $\Dist_H(u, v)$ denotes the distance between $u$ and $v$ in $H$.
Let $D_H$ denote the diameter of $H$, i.e.,
$D_H := \max_{u, v \in V(H)} \Dist_H(u,v)$.
A \emph{partition} of $V(H)$ is a disjoint collection of subsets of $V(H)$ such that their union is $V(H)$.
For a partition $\Pcal$, each $P \in \Pcal$ is referred to as a \emph{part} of $\Pcal$.

\subsection{\textsf{CONGEST-KT$_1$} Model}
As the computational model, we assume the \textsf{CONGEST-KT$_1$} model.
Let $G$ be the input graph (communication topology) with $n$ vertices and $m$ edges.\footnote{Since we consider the MST problem, the input graph is formally equipped with an edge-weight function $\mathsf{wt} : E(G) \to \mathbb{N}$. However, the weights are relevant only to the reduction from spanners to MST (Lemma~\ref{lma:mst-from-spanner}), which we take as a black box from prior work. The rest of our construction operates purely on the unweighted structure of $G$, and we therefore do not make the weight function explicit.}
Computation on $G$ proceeds in synchronous rounds, where in each round every node can perform arbitrary local computation and sends a (possibly different) message of $O(\log n)$ bits to each of its neighbors.
Each vertex has a unique identifier (ID), where IDs are drawn from the set of
nonnegative integers. We assume that an ID can be represented using $O(\log n)$ bits.
When we explicitly refer to the ID of a vertex $u$, we write $\mathsf{id}(u)$.
In the \textsf{KT$_1$} setting, each vertex initially knows the IDs of all its neighbors.

As implicitly assumed in prior work, we also assume that all the vertices initially know the precise value of $n$. Note that this assumption can be relaxed to the knowledge of any constant approximation of $n$ without degrading the asymptotic performance
of the algorithm.

\subsection{Outdetect Labeling Scheme}
The \emph{outdetect labeling scheme}~\cite{IEWM25} is a labeling scheme for detecting outgoing edges of a given
subgraph $H$ of the input graph $G$ with a technique similar to
the \emph{graph sketch} by Ahn, Guha, and McGregor~\cite{AGMR12,AGM12} and \emph{TestOut} by King, Kutten, and Thorup~\cite{KKT15}.
The outdetect labeling scheme assigns each vertex $u \in V(H)$ a short label $L(u)$ of
$\tilde{O}(f)$ bits, where $f$ is a design parameter.
Given the bitwise XOR
\begin{align*}
  L(V(H)) = \oplus_{v \in V(H)} L(v),
\end{align*}
the scheme admits recovering all edges in $\partial_G(V(H))$ (in the form of endpoint pairs)
if $|\partial_G(V(H))| \leq f$, or otherwise outputting a subset of $f$ edges in $\partial_G(V(H))$
together with the detection that $|\partial_G(V(H))| > f$.
Label construction is fully decentralized in the \textsf{KT$_1$} setting: each vertex $u$ can
compute its label $L(u)$ locally. Suppose that $H$ has a spanning tree $T_H$ of height $h$.
Then, using standard pipelined aggregation and broadcast along $T_H$, every vertex in $V(H)$ can compute $L(V(H))$
within $\tilde{O}(h + f)$ rounds and detect the outgoing edges in $\partial_G(V(H))$.
This procedure uses $\tilde{O}(|V(H)|f)$ messages.

We formally state this subroutine as follows.

\begin{theorem} \label{thm:outdetect}
  Let $G$ be any graph, and let $H \subseteq G$ be a subgraph.
  Suppose that a spanning tree $T_H$ of height $h$ is given.
  For any $f > 0$, there exists a randomized CONGEST algorithm $\textsf{OutDetect}(f)$ such that
  every vertex in $V(H)$ detects all edges in $\partial_G(V(H))$ if $|\partial_G(V(H))| \leq f$, or
  otherwise outputs a subset of $f$ edges in $\partial_G(V(H))$ together with the detection that $|\partial_G(V(H))| > f$.
  The algorithm runs in $\tilde{O}(h + f)$ rounds, uses $\tilde{O}(|V(H)|f)$ messages, and succeeds
  with probability at least $1 - O(1/n^3)$.
\end{theorem}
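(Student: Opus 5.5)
We prove Theorem~\ref{thm:outdetect} by instantiating the outdetect labeling scheme of~\cite{IEWM25} and running a pipelined convergecast and broadcast along $T_H$. The argument has three parts: a locally computable label, a pipelined aggregation of the XOR of the labels, and a local decoding step.

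\textbf{Labels.} Each vertex $u$ computes $L(u)$ locally from its incident edges. In the \textsf{KT$_1$} setting $u$ knows $\mathsf{id}(u)$ and all neighbor IDs, so it knows each incident edge as an ordered pair $(\mathsf{id}(\min),\mathsf{id}(\max))$. The label is the XOR, over incident edges $e$, of an $\tilde{O}(f)$-bit sketch $\sigma(e)$ that depends only on $e$ and a shared random seed. Because $\sigma(e)$ depends only on $e$, the two endpoints of an edge inside $V(H)$ contribute identical copies. These copies cancel in the XOR, so
\[
L(V(H))=\bigoplus_{e\in\partial_G(V(H))}\sigma(e).
\]
Concretely, $\sigma(e)$ is a $k$-sparse recovery sketch with $k=f$: for instance, $O(\log n)$ levels of subsampling with $O(f)$ hash buckets each, where every bucket stores the XOR of the edge encodings together with a checksum. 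Alternatively, one can use an $f$-wise independent Reed--Solomon/syndrome-style encoding, whose XOR determines any set of at most $f$ edges. Either way, when $|\partial_G(V(H))|\le f$, decoding returns the whole set. When $|\partial_G(V(H))|>f$, decoding either returns $f$ verified edges or detects the overflow via the checksums. The failure probability per decode is $O(n^{-3})$, obtained from $O(\log n)$-wise independent hashing with $O(\log n)$ repetitions.

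\textbf{Shared randomness and aggregation.} The sketches need common hash functions, specified by an $O(\log^2 n)$-bit seed. The root of $T_H$ draws this seed and broadcasts it down $T_H$ in $O(h+\log n)$ rounds, sending $O(|V(H)|\log n)$ messages. This suffices because only vertices of $V(H)$ compute labels that get combined. Next, every vertex computes its $\tilde{O}(f)$-bit label, split into $\tilde{O}(f)$ words of $O(\log n)$ bits. We then convergecast the XOR up $T_H$ word by word. Word $j$ of a vertex is sent once its children's word $j$ has arrived, so the pipeline finishes in $h+\tilde{O}(f)$ rounds. Each tree edge carries $\tilde{O}(f)$ words, for $\tilde{O}(|V(H)|f)$ messages in total. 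The root decodes and broadcasts the recovered edge list (at most $f$ edges plus an overflow bit) down $T_H$, again pipelined, within the same bounds.

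\textbf{Main obstacle.} The main obstacle is the correctness of the sparse-recovery decoding in the over-threshold case. Decoding must never output a non-edge or a spurious edge. We handle this by verifying each candidate with a per-bucket fingerprint of its encoding and rejecting buckets whose fingerprint fails. A union bound over the $\tilde{O}(f)$ buckets then gives failure probability $1-O(1/n^3)$ after choosing the constants appropriately. The remaining claims, that internal edges cancel and that pipelining gives the stated round and message bounds, are routine.
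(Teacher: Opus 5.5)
Your overall architecture matches the paper's. The paper treats this theorem as a tool from \cite{IEWM25} and only sketches it: labels computed locally from \textsf{KT$_1$} knowledge whose XOR cancels every edge with both endpoints in $V(H)$, a pipelined convergecast of the $\tilde{O}(f)$ label words up $T_H$, and a pipelined broadcast of the decoded edges. This gives $\tilde{O}(h+f)$ rounds and $\tilde{O}(|V(H)|f)$ messages. Your seed broadcast is an extra but affordable step ($O(h+\log n)$ rounds, $O(|V(H)|\log n)$ messages). Your sketches need it; the cited scheme, which is computed purely locally, does not. A minor slip: in your last paragraph, $1-O(1/n^3)$ is the success probability, not the failure probability.

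The gap is the over-threshold clause. When $|\partial_G(V(H))|>f$, the theorem requires the algorithm to output $f$ genuine edges of $\partial_G(V(H))$ \emph{and} to flag the overflow. You only claim that decoding ``either returns $f$ verified edges or detects the overflow.'' Your obstacle paragraph argues soundness (no spurious edges) but gives no completeness argument. Your Reed--Solomon/syndrome alternative cannot meet the clause at all, because the power-sum syndrome of a set with more than $f$ elements does not let you decode any $f$ of its members. This matters for the paper: Lemma~\ref{lma:spanner-from-partition} runs \textsf{OutDetect}$(1)$ on open parts that typically have many outgoing edges and needs an actual edge to merge along, and the MST reduction in the appendix uses it the same way.

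To close the gap, drop the syndrome option and prove that the subsampling structure is an $f$-sampler. Let $R$ be the union of verified (isolated, fingerprint-consistent) edges over all levels and $O(\log n)$ independent repetitions. Show that $|R|\geq\min\{f,|\partial_G(V(H))|\}$ with probability $1-O(n^{-3})$, in two cases.
When $|\partial_G(V(H))|=O(f)$, use level $0$: each edge is isolated in some repetition.
Otherwise, use the level $\ell$ with $|\partial_G(V(H))|\,2^{-\ell}=\Theta(f)$. In each repetition, with constant probability $\Theta(f)$ edges survive, and with $Cf$ buckets for a large constant $C$ at least $f$ of them are isolated. The repetitions are essential for small $f$, where the survivor count is not concentrated.
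Then certify overflow by checking whether $|R|>f$ or $\bigoplus_{e\in R}\sigma(e)\neq L(V(H))$. By linearity the residual is the sketch of $\partial_G(V(H))\setminus R$. That sketch is nonzero with high probability whenever the set is nonempty, provided the fingerprint's XOR over any nonempty set vanishes with probability $n^{-\Omega(1)}$. One such fingerprint is $\bigoplus_{e} r^{\langle e\rangle}$ over $\mathrm{GF}(2^{O(\log n)})$, with $\langle e\rangle$ the integer encoding of $e$'s endpoint pair and $r$ uniformly random; Schwartz--Zippel gives the bound.
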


\subsection{Low-Diameter Decomposition}

\begin{definition}[Low-Diameter Decomposition]
  Let $H=(V(H),E(H))$ be an undirected graph,
  and let $d > 0$. A partition
  $\Scal=\{S_1,\dots,S_k\}$ of $V(H)$
  is called a $d$-low diameter decomposition ($d$-LDD)
  if the following conditions hold:
  \begin{enumerate}
    \item For every cluster $S_i\in\Scal$,
          $\max_{u,v\in S_i}\Dist_{H[S_i]}(u,v) = O(d \log n)$.
    \item At most $O(n / d)$ vertices have an incident edge crossing different parts.
  \end{enumerate}
\end{definition}

We refer to a vertex $v \in V(H)$ having an incident edge crossing different parts as a \emph{boundary vertex}.
It is well known that there exists a randomized CONGEST algorithm for computing a $d$-LDD of
the input graph.

\begin{theorem}[Miller, Peng, Xu~\cite{MPX13}] \label{thm:diameter-d-decomposition}
  Given an input graph $H = (V(H),E(H))$ and $d > 0$, there exists a randomized algorithm
  $\textsf{MPX}(d)$
  for computing a partition $\Qcal$ of $V(H)$ satisfying the following conditions:
  \begin{enumerate}
    \item $\max_{Q \in \Qcal} D_{H[Q]} \leq O(d \log n)$ holds with probability $1 - O(1/n)$.
    \item Each vertex $v$ becomes a boundary vertex with probability at most $2d^{-1}$.
  \end{enumerate}
  The running time of the algorithm is $O(d \log n)$ rounds with probability $1 - O(1/n)$,
  and the message complexity is $O(|E(H)|)$.
\end{theorem}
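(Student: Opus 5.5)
We prove the three claims by analyzing the exponential-shift clustering of Miller, Peng, and Xu, taking $\beta := 1/d$ as the rate. Every vertex $u \in V(H)$ independently draws $\delta_u \sim \mathrm{Exp}(\beta)$. Every vertex $v$ joins the center $c(v) := \arg\min_{u} \bigl(\Dist_H(u,v) - \delta_u\bigr)$, with ties occurring with probability $0$ and broken by ID. The part $Q$ of center $c$ is $\{v : c(v) = c\}$.

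\emph{Diameter.} Each part is connected and has small radius in $H[Q]$. Let $v$ be assigned to $c$ and let $x$ be any vertex on a shortest $c$--$v$ path. Then $x$ is also assigned to $c$: if some $c'$ satisfied $\Dist(c',x) - \delta_{c'} < \Dist(c,x) - \delta_c$, then adding $\Dist(x,v)$ to both sides would contradict $c(v) = c$. So the whole shortest path lies in $Q$. Moreover, $\Dist(c,v) - \delta_c \le \Dist(v,v) - \delta_v \le 0$, hence $\Dist_{H[Q]}(c,v) \le \delta_c \le \max_u \delta_u$. A union bound on exponential tails gives $\max_u \delta_u \le 3d\ln n$ with probability $1 - 1/n^2$. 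Therefore every part has diameter $O(d\log n)$ with probability $1 - O(1/n)$.

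\emph{Boundary probability.} Fix $v$ and define the shifted values $Y_u := \delta_u - \Dist_H(u,v)$. Suppose $v$ is a boundary vertex, with $c(v) = c$ and a neighbor $w$ satisfying $c(w) = c' \ne c$. Using $|\Dist(x,v) - \Dist(x,w)| \le 1$ twice, together with the optimality of $c'$ at $w$, we get
\[
\Dist(c',v) - \delta_{c'} \le \Dist(c',w) - \delta_{c'} + 1 \le \Dist(c,w) - \delta_c + 1 \le \Dist(c,v) - \delta_c + 2 .
\]
So the largest and second-largest values among $\{Y_u\}_u$ differ by at most $2$. The key step is to bound the probability of this event by $1 - e^{-2\beta} \le 2\beta = 2/d$. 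We condition on the identity of the argmax and on all other $\delta$'s, so that the runner-up value $t$ is fixed. The maximizer $Y_c$ is then distributed as $\delta_c - \Dist(c,v)$ conditioned on exceeding $t$. By memorylessness of the exponential distribution, the excess $Y_c - t$ is again $\mathrm{Exp}(\beta)$. The one subtle case is $t < -\Dist(c,v)$: there the conditioned excess stochastically dominates $\mathrm{Exp}(\beta)$, which only helps. In either case the excess is at most $2$ with probability at most $1 - e^{-2\beta}$. Averaging over the conditioning yields the bound.

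\emph{Implementation.} We simulate the process as a BFS with staggered start times. Vertex $u$ starts broadcasting its identity and shifted value at round $\lfloor \max_x \delta_x - \delta_u \rfloor$. An unassigned vertex that hears proposals adopts the best one, using the fractional parts of the $\delta$'s to resolve proposals that arrive in the same round, and forwards only that single proposal once. The start offsets require knowing an upper bound on $\max_x \delta_x$; we use $3d\ln n$, which holds w.h.p. by the tail bound above. As a result every vertex sends $O(1)$ messages per incident edge, giving $O(|E(H)|)$ messages in total. The process finishes within $O(d\log n)$ rounds w.h.p.

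The main obstacle is the boundary-probability step. Memorylessness must be applied correctly under conditioning, including the case where the runner-up value lies below the maximizer's offset $-\Dist(c,v)$. The rest is standard.
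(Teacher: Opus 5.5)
Your proposal is correct and follows essentially the same route as the paper's outline in Appendix~\ref{appendix:MPX}. Your rule $c(v)=\arg\min_u(\Dist_H(u,v)-\delta_u)$ is the super-source shortest-path-tree formulation, the diameter bound comes from the maximum shift, and the boundary event is reduced by the same two triangle inequalities to the two best shifted values lying within $2$ of each other. The only difference is that you prove this gap bound yourself via memorylessness (including the case where the runner-up lies below $-\Dist_H(c,v)$) and spell out the staggered-start BFS implementation, whereas the paper cites both from~\cite{MPX13}.
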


\section{MST Construction via \texorpdfstring{$(\alpha, \beta)$}{(alpha, beta)}-Partition}
\label{sec:MST}

\subsection{\texorpdfstring{$(\alpha, \beta)$}{(alpha, beta)}-Partition}

For $\alpha \geq 1$ and $\beta \geq 0$, a spanning subgraph $H' \subseteq H$ is called an \emph{$(\alpha,\beta)$-spanner} of $H$ if for all $u, v \in V(H)$,
\begin{align*}
  \Dist_{H'}(u,v) \le \alpha\,\Dist_H(u,v) + \beta.
\end{align*}

Let $H$ be a graph, and let $\Pcal$ be a partition of $V(H)$.
For a set of edges $F \subseteq E(H)$, define $H_F = (V(H), F)$.
We say that a part $P \in \Pcal$ is \emph{closed with respect to $F$}
if $\partial_H(P) \subseteq F$ holds, and otherwise we say that $P$ is \emph{open with respect to $F$}.
Let $\Pcal^{\mathrm{open}}_F$ (resp., $\Pcal^{\mathrm{closed}}_F$) denote the family of all parts that are open (resp., closed) with respect to $F$.
We now define the notion of an $(\alpha, \beta)$-partition.

\begin{definition} \label{dfn:abc-partition}
  Let $H=(V(H), E(H))$ be a graph, and let $F \subseteq E(H)$ be an edge subset.
  An \emph{$(\alpha, \beta)$-partition} of $H$ is a pair $(\Pcal, F)$ of a partition $\Pcal$ and an edge subset $F \subseteq E(H)$
  satisfying the following two conditions:
  \begin{enumerate}
    \item \label{enum:part-diameter} For every $P \in \Pcal$, $D_{H_F[P]} \leq \alpha$.
    \item \label{enum:open-part} The open parts with respect to $F$ satisfy
          $\sum_{P \in \Pcal^{\mathrm{open}}_F} (D_{H_F[P]} + 1) \leq \beta$.
  \end{enumerate}
\end{definition}

\begin{lemma} \label{lma:spanner-from-partition}
  Let $H=(V(H), E(H))$ be a graph, and let $(\Pcal, F)$ be an $(\alpha, \beta)$-partition
  of $H$. Suppose that each vertex $v$ knows the set $\partial_H(v) \cap F$ and also knows the part $P \in \Pcal$ such that $v \in P$.
  Then there exists a randomized CONGEST algorithm for constructing an $((\alpha + 1), \beta)$-spanner of $H$ that runs in $\tilde{O}(\alpha + \beta)$ rounds and uses $\tilde{O}(n)$ messages.
\end{lemma}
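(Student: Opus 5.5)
Given the $(\alpha,\beta)$-partition $(\Pcal,F)$, the spanner will be $F$ plus a small set of extra edges that glues the open parts into a low-diameter connected structure. The construction runs Borůvka/GHS-style merging on the open parts, using $\textsf{OutDetect}$ (Theorem~\ref{thm:outdetect}) with $f=1$ to find outgoing edges.

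Stretch comes first. Take $u,v$ and a shortest $u$--$v$ path $\pi$ in $H$. Every edge of $\pi$ that leaves a closed part lies in $F$. So $\pi$ breaks into maximal segments, each either staying inside one part or crossing into another, and we route as follows.
\begin{itemize}
  \item Within a part $P$, a subpath $x\leadsto y$ of $\pi$ is replaced by an $F$-path of length at most $D_{H_F[P]}\le\alpha$.
  \item This gives stretch $\alpha+1$ as long as $\pi$ only passes through closed parts: each edge of $\pi$ costs at most $\alpha+1$ (one crossing edge plus one intra-part detour, charged to the edge that enters the part).
  \item If $\pi$ touches an open part, let $x$ be the first vertex of $\pi$ in an open part and $y$ the last. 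Route $u\to x$ and $y\to v$ as above.
  \item Connect $x$ to $y$ through a tree $T^{\mathrm{open}}$ that we build. It spans all vertices of open parts, consists of $F$-edges inside each open part plus one inter-part edge per merge, and has diameter at most $\sum_{P\in\Pcal^{\mathrm{open}}_F}(D_{H_F[P]}+1)\le\beta$.
\end{itemize}
The total is at most $(\alpha+1)\Dist_H(u,v)+\beta$.

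Building $T^{\mathrm{open}}$ is the core step.
\begin{itemize}
  \item Since every part has diameter at most $\alpha$ under $F$, each part first builds a BFS tree over its $F$-edges in $O(\alpha)$ rounds. These $F$-edges, plus $O(n)$ tree messages, are the only messages so far.
  \item Run $O(\log n)$ Borůvka phases on the open parts. In each phase, every current merged component runs $\textsf{OutDetect}(1)$ over its tree on a sampled/random-ID outgoing-edge sketch (KKT-style TestOut) to obtain one edge leaving it. We must make sure this edge lands in another open part, or at least in some other part.
  \item Merge components along the detected edges, using a star-merging or coin-flip rule to bound the depth.
\end{itemize}
Each phase costs $\tilde{O}(n)$ messages, because the label aggregation costs $\tilde O(|V|\cdot 1)$ per component. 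Its round cost is proportional to the current component diameter, which is at most the sum of the diameters plus 1 of the open parts merged so far, hence $\le\beta$. Each edge detected from an open component either hits another open part or hits a closed part. It cannot hit a closed part through a non-$F$ edge, since every boundary edge of a closed part is in $F$; so we restrict detection to edges outside $F$. Such edges necessarily go between open parts.

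The phases continue until no open component has an outgoing non-$F$ edge. At that point, the union of $F$ with the selected edges spans $H$ with the required distances.

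The main obstacle is the diameter bound of the merged open components. A merge tree path can visit each open part at most once, because the merged structure is a tree of parts. So the diameter is at most the sum over the open parts of $(D_{H_F[P]}+1)$, which is $\le\beta$ by Definition~\ref{dfn:abc-partition}. I would argue this via the tree-of-parts structure.

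I also need to check that aggregation over a tree of height $\le\beta$ costs only $\tilde O(\beta)$ rounds per phase. This, together with the $O(\log n)$ phases, gives $\tilde O(\alpha+\beta)$ rounds and $\tilde O(n)$ messages with high probability.
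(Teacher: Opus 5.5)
Your algorithm matches the paper's. You merge the open parts Borůvka-style with $\textsf{OutDetect}(1)$ run on $H - F$, and you correctly observe that a non-$F$ crossing edge can only join two open parts. You also charge $\tilde{O}(n)$ messages and $\tilde{O}(\beta)$ rounds to each of $O(\log n)$ phases, as the paper does.

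The gap is in the stretch argument. You route the middle of $\pi$, from the first vertex $x$ in an open part to the last vertex $y$ in an open part, through a single tree $T^{\mathrm{open}}$ that ``spans all vertices of open parts''. The merging does not in general produce such a tree. Components are merged only along non-$F$ edges, and the process stops once no component has an outgoing non-$F$ edge. So open parts linked only through closed parts, or only through $F$-edges, end up in different components, and $T^{\mathrm{open}}$ is in general a forest. For example, take the path $a,b,c,d,e$ with singleton parts and $F=\{\{b,c\},\{c,d\}\}$. The part $\{c\}$ is closed and the other four are open. Merging produces the two components $\{a,b\}$ and $\{d,e\}$, so for $u=a$, $v=e$ there is no $x$--$y$ path in $T^{\mathrm{open}}$ at all. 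Your routing also ignores that the portion of $\pi$ between $x$ and $y$ may pass through closed parts.

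To repair it, decompose $\pi$ according to the final partition: the originally closed parts together with the merged components. All of these are closed with respect to the final edge set $F'$, so every edge of $\pi$ between two pieces lies in $F'$.
\begin{itemize}
\item Maximal subpaths inside closed parts are handled as you do, at cost at most $(\alpha+1)$ times their $H$-length including the exit edge.
\item If $\pi$ visits a merged component $A$ more than once, shortcut inside $A$ from the first vertex of $\pi$ in $A$ to the last one. Then each component is traversed at most once, at cost at most its diameter plus one exit edge.
\item That cost is at most $\sum_{P\subseteq A}(D_{H_F[P]}+1)$, provided you use shortest paths of $H_F[P]$ inside each part. BFS-tree paths could cost up to $2D_{H_F[P]}$.
\item Distinct components consist of disjoint sets of open parts, so these contributions sum to at most $\beta$.
\end{itemize}
This gives $(\alpha+1)\Dist_H(u,v)+\beta$. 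This per-component decomposition with shortcutting is exactly the paper's argument.
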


\begin{proof}
  We first describe the construction algorithm. Initially, each part $P \in \Pcal$ determines whether it is open
  or closed with respect to $F$. By definition, $P$ is open if and only if
  $\partial_{H - F}(P) \neq \emptyset$. This condition can be verified by running
  \textsf{OutDetect} with parameter $f = 1$ in $H - F$ (recall that each vertex initially knows which of its incident edges belong to $F$).

  The algorithm then iteratively merges open parts by detecting their outgoing edges.
  In each iteration, it applies \textsf{OutDetect} with $f = 1$ in $H - F$ to every part that is still open.
  Whenever two open parts are detected to be adjacent, the algorithm merges them by adding the detected
  crossing edge to $F$. Otherwise, the part becomes closed.
  Since the number of open parts decreases by at least a constant fraction in each iteration,
  $O(\log n)$ iterations suffice to close all parts. In the following argument, let $F'$ be the
  edge set $F$ after this merging step.

  The message complexity of each iteration is $\tilde{O}(n)$ because $f = 1$.
  Regarding the running time, the first invocation of \textsf{OutDetect} takes $\tilde{O}(\alpha)$ rounds by
  condition~(\ref{enum:part-diameter}) of Definition~\ref{dfn:abc-partition}.
  In the subsequent iterations, it is invoked only for open parts.
  By condition~(\ref{enum:open-part}), the total diameter of open parts at any iteration is bounded by $\beta$.
  Hence, each iteration takes $\tilde{O}(\beta)$ rounds, and the total running time is $\tilde{O}(\alpha + \beta)$.

  We now show that the resulting graph $H' = (V(H), F')$ is an $((\alpha+1), \beta)$-spanner of $H$.
  Let $\Pcal'$ denote the partition resulting from the merging process.
  Recall that $\Pcal'$ consists only of closed parts (with respect to $F'$).
  Let $u, v \in V(H)$ be arbitrary vertices, and consider a shortest $u$--$v$ path $\pi$ in $H$. We decompose $\pi$ into maximal subpaths contained in a single part of $\Pcal'$:
  \begin{align*}
    \pi = B_1 \circ e_1 \circ B_2 \circ e_2 \circ \cdots \circ e_{t-1} \circ B_t,
  \end{align*}
  where each $B_j$ lies entirely in a single part $P(B_j) \in \Pcal'$,
  and each $e_j$ is an edge between $P(B_j)$ and $P(B_{j+1})$.
  Let $a_j$ and $b_j$ denote the first and last vertices of $B_j$, respectively, and set $a_{t+1} = v$.
  Since every part in $\Pcal'$ is closed with respect to $F'$, every crossing edge $e_j$ belongs to $F' \subseteq E(H')$.

  We bound $\Dist_{H'}(u,v)$ by exhibiting a $u$--$v$ path $\pi'$ in $H'$ obtained from $\pi$: within each part $P(B_j)$ we replace $B_j$ by a shortest $a_j$--$b_j$ path of $H'[P(B_j)]$, and between consecutive parts we keep the crossing edges $e_j \in F'$.
  Recalling $a_1 = u$ and $a_{t+1} = v$, the length of $\pi'$ satisfies $\Dist_{H'}(u,v) \leq \sum_{j=1}^{t} \Dist_{H'}(a_j, a_{j+1})$, and since $a_1, \dots, a_{t+1}$ are consecutive vertices of $\pi$, we have $\sum_{j=1}^{t} \Dist_H(a_j, a_{j+1}) = \Dist_H(u,v)$.
  We split the sum according to the origin of $P(B_j)$, separating the segments lying in an originally closed part from the rest:
  \begin{align*}
    \Dist_{H'}(u,v) \leq \sum_{j:\, P(B_j) \in \Pcal_F^{\mathrm{closed}}} \Dist_{H'}(a_j, a_{j+1})
    \;+\; \sum_{j:\, P(B_j) \notin \Pcal_F^{\mathrm{closed}}} \Dist_{H'}(a_j, a_{j+1}),
  \end{align*}
  and bound the two sums separately. The first sum is bounded as
  \begin{align*}
    \sum_{j:\, P(B_j) \in \Pcal_F^{\mathrm{closed}}} \Dist_{H'}(a_j, a_{j+1})
    \leq (\alpha+1) \sum_{j:\, P(B_j) \in \Pcal_F^{\mathrm{closed}}} \Dist_H(a_j, a_{j+1})
    \leq (\alpha+1)\,\Dist_H(u,v).
  \end{align*}
  For the second sum, we may assume that each contributing part is entered exactly once, because whenever $\pi$ enters a part twice we can shortcut within that part and thereby further shorten $\pi'$. As these parts originate from $\Pcal_F^{\mathrm{open}}$ and each open bundle adds only one crossing edge, condition~(\ref{enum:open-part}) bounds the second sum by $\sum_{P \in \Pcal_F^{\mathrm{open}}} (D_{H_F[P]} + 1) \leq \beta$.
  Combining the two bounds yields $\Dist_{H'}(u,v) \leq (\alpha+1)\,\Dist_H(u,v) + \beta$. Hence $H'$ is an $((\alpha+1), \beta)$-spanner of $H$.
\end{proof}

\subsection{From Spanner to MST}

\begin{restatable}[Ghaffari and Kuhn, and Gmyr and Pandurangan]{lemma}{LmaMSTfromSpanner}
  \label{lma:mst-from-spanner}
  Suppose that an $(\alpha,\beta)$-spanner $G^{\ast}$ of $G$
  with $|E(G^{\ast})|=M$ is constructed. Then one can compute an MST of $G$ in
  \[
    \tilde{O}\bigl(\alpha D_G + \beta + \sqrt{n}\bigr)
  \]
  rounds using $\tilde{O}(M)$ messages in the \textsf{CONGEST-KT$_1$} model.
\end{restatable}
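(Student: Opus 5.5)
The plan is to follow the Ghaffari--Kuhn / Gmyr--Pandurangan recipe, with message-efficient partwise aggregation in place of their aggregation step. The pipeline is: elect a leader and build a BFS-like tree on the spanner $G^{\ast}$; run a Bor\r{u}vka-style MST algorithm in $O(\log n)$ phases; in each phase, every fragment finds its minimum-weight outgoing edge in $G$ (not merely in $G^{\ast}$), and communication is confined to $G^{\ast}$ plus $\tilde{O}(n)$ extra messages per phase. Because $G^{\ast}$ is an $(\alpha,\beta)$-spanner, $D_{G^{\ast}} \le \alpha D_G + \beta$. Hence a leader election and a BFS tree $T$ of $G^{\ast}$ cost $\tilde{O}(\alpha D_G+\beta)$ rounds and $\tilde{O}(M)$ messages, for instance by flooding the minimum ID over $G^{\ast}$ for $O(\log n)$ phases of random ranks.

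For the minimum outgoing edge of a fragment, the plan is to use the KKT15 technique (TestOut, equivalently the outdetect labels of Theorem~\ref{thm:outdetect} with $f=1$), combined with binary search over weights. Each vertex computes its label locally from KT$_1$ knowledge restricted to incident edges of weight at most $w$. XOR-aggregating these labels over a fragment reveals whether an outgoing edge of weight $\le w$ exists and, if so, returns one. After $O(\log n)$ binary-search steps with $O(\log n)$ repetitions each, we obtain the minimum outgoing edge w.h.p. Each step is a single partwise aggregation of $\tilde{O}(1)$-bit values over all fragments at once.

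Partwise aggregation is the round bottleneck. Small fragments, meaning those of diameter at most $\sqrt{n}$ within their own fragment tree, aggregate internally, so across all fragments this costs $\tilde{O}(\sqrt{n})$ rounds and $\tilde{O}(n)$ messages. There are at most $\sqrt{n}$ large fragments, and they are aggregated over $T$ via pipelining, the standard Kutten--Peleg/GKP approach. This takes $\tilde{O}(D_{G^{\ast}}+\sqrt{n})$ rounds. The messages are $\tilde{O}(M)$, either by restricting to the tree edges of $T$ (giving $\tilde{O}(n\sqrt{n})$ messages, too many in general) or better by the HHW18 message-efficient shortcut aggregation, which uses $\tilde{O}(|E(G^{\ast})|)=\tilde{O}(M)$ messages and $\tilde{O}(D_{G^{\ast}}+\sqrt{n})$ rounds on $G^{\ast}$. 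The plan invokes HHW18 as a black box on the graph $G^{\ast}$: aggregation happens over $G^{\ast}$, but the fragments are subgraphs of the MST, which may use edges outside $G^{\ast}$. To handle this, each fragment is treated as a part that may be disconnected in $G^{\ast}$. A merge edge $(u,v)$ is added as a virtual link simulated by its two endpoints, which costs $O(1)$ messages per use and $O(n)$ per phase, since there are $O(n)$ tree edges.

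The main obstacle is this last point. Partwise aggregation frameworks assume the parts are connected in the communication graph used, while here fragments are connected via MST edges of $G$ that are absent from $G^{\ast}$. The first remedy to try is to add all discovered MST edges to $G^{\ast}$. This adds only $n-1$ edges, keeps the message count at $\tilde{O}(M+n)=\tilde{O}(M)$ because $M\ge n-1$, and can only decrease distances, so the spanner guarantee survives. With that, fragments are connected subgraphs of $G^{\ast}\cup F_{\mathrm{MST}}$, and HHW18 applies directly. Summing over $O(\log n)$ Bor\r{u}vka phases, each with polylogarithmically many aggregations, gives $\tilde{O}(\alpha D_G+\beta+\sqrt{n})$ rounds and $\tilde{O}(M)$ messages.
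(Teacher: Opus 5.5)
Your proposal is correct and follows essentially the same route as the paper's sketch. Both use Bor\r{u}vka phases, detect each fragment's minimum outgoing edge via $O(\log n)$ binary-search invocations of \textsf{OutDetect} with $f=1$ (KKT15) cast as partwise aggregation, aggregate small parts internally and large parts over a BFS tree of $G^{\ast}$ (depth at most $\alpha D_G+\beta$), and rely on the message-efficient shortcut technique of HHW18 to keep the cost at $\tilde{O}(M)$. Your remarks on electing the BFS root and on adding the $n-1$ discovered MST edges to $G^{\ast}$, so that fragments are connected in the aggregation graph, fill in details the paper leaves implicit without changing the bounds.
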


\sloppy{
  The outline of the proof is given in the appendix.
  Combining Lemma~\ref{lma:mst-from-spanner} with Lemma~\ref{lma:spanner-from-partition}, our
  goal reduces to constructing an edge subset $F$ of size $\tilde{O}(n^{1 - 3\kappa/2})$ together
  with a $(\tilde{O}(n^{\kappa/2}), \tilde{O}(n^{1-3\kappa/2}))$-partition with respect to $F$.
}

\section{Message-Efficient Construction of \texorpdfstring{$(\alpha, \beta)$}{(...)}-Partition}
\label{sec:partition}

\subsection{Degree-Adaptive Clustering}

In this section, we present a clustering algorithm \textsf{DCluster}, which outputs an edge
subset $F \subseteq E(G)$ and a partition $\Ccal$ of $V(G)$ naturally induced by the
connected components of $(V(G), F)$. Let $i_{\max} = \lceil \log_2 n \rceil - 1$.
For $0 \leq i \leq i_{\max}$, we define $\Delta_i = 2^i$, and
$X_i = \{v \in V(G) \mid \Deg_G(v) > \Delta_i\}$. Precisely, the output
partition $\Ccal$ consists of $i_{\max} + 1$ collections of subpartitions $\Ccal^{(i)} \subseteq \Ccal$
($i \in [0, i_{\max}]$) satisfying the following properties:
\begin{enumerate}
  \item The collections $\Ccal^{(i)}$ for all $i$ are mutually disjoint.
  \item For any vertex $v$ in a cluster $C \in \Ccal^{(i)}$,
        $\Deg_G(v) \leq 2\Delta_{i}$.
  \item The diameter of the subgraph of $(V(G),F)$ induced by each cluster is at most $2\lceil \log_2 n \rceil - 2$.
  \item $|\Ccal^{(i)}| \leq 12 n \ln n / \Delta_i$.
\end{enumerate}

Algorithm~\ref{alg:clustering} shows the pseudocode of \textsf{DCluster}.
It processes vertices in $X_i$ in decreasing order of $i$.
In iteration $i$, it makes vertices in $X_i$ join some cluster.
All vertices that join a cluster are removed from $G_i$.

Let $G_i$ be the remaining graph at the beginning of iteration $i$.
In iteration $i$, each vertex in $V(G_i)$ is independently sampled
as a cluster center with probability $\min\{1, 6\ln n/\Delta_i\}$. The
sampled set is denoted by $W_i$ in Algorithm~\ref{alg:clustering}.
If a vertex $v \in V(G_i) \cap X_i$ has a neighboring center $w \in W_i$,
then it joins the cluster of $w$ by adding $\{v, w\}$ to $F$
(if two or more centers are neighboring, an arbitrary one is chosen).
Otherwise, $v$ samples $O(\log n)$ vertices from $N_G(v)$ independently.
(We emphasize that they are sampled from $N_G(v)$, but not from $N_{G_i}(v)$.)
If among them there exists a vertex $v'$ that has already been removed (i.e., already
joined a cluster in an earlier iteration), then $v$ connects to that cluster by adding
$\{v, v'\}$ to $F$. The vertices that are chosen as centers or join some cluster
are removed from $G_i$. Finally, the clusters centered at vertices in $W_i$ are output
as $\Ccal^{(i)}$.

One can show that all vertices in $X_i$ are removed during iteration $i$
with high probability. Intuitively, for any $v \in X_i$, there are two
possible cases: (1) a majority of its neighbors still remain. Since $v \in X_i$, its degree
in $G$ is at least $\Delta_i$, and thus its degree in the current graph
is at least $\Delta_i/2$. Hence, at least one neighbor is sampled with high probability. (2)
A majority of its neighbors have already been removed. Then the sampling by $v$ picks up at least one
neighbor that has already been removed. In either case, $v$ joins a cluster.

\begin{algorithm}[t]
  \caption{$\textsf{DCluster}(G)$ : Degree-Adaptive Clustering}
  \label{alg:clustering}
  \begin{algorithmic}[1]
    \State $\mathsf{sample}(X, p)$: sample each element in $X$ independently with probability $p$
    \State $F \gets \emptyset$;

    \State $G_{i_{\max}} \gets G$
    \For{$i \gets i_{\max}$ \textbf{downto} $0$}
    \State $W_{i} \gets \mathsf{sample}(V(G_i), \min\{1, 6\ln n/\Delta_i\})$

    \ForAll{$v\in X_i\setminus W_i$}
    \If{$N_G(v)\cap W_i\neq\emptyset$}
    \State choose any $w\in N_G(v)\cap W_i$
    \State $F\gets F\cup\{\{v,w\}\}$
    \Else
    \State $Y \gets \mathsf{sample}(N_G(v), \min\{1, 6\ln n/\Delta_i\})$
    \If{$\exists y\in Y \setminus V(G_i)$}
    \State choose any such $y$
    \State $F\gets F\cup\{\{v,y\}\}$
    \EndIf
    \EndIf
    \EndFor
    \If{$i>0$}
    \State $G_{i-1}\gets G_i\setminus (X_i\cup W_i)$
    \EndIf
    \EndFor
    \For{$i \gets i_{\max}$ \textbf{downto} $0$}
    \State $\Ccal^{(i)} \gets \text{all clusters centered at vertices in $W_i$}$
    \EndFor
  \end{algorithmic}
\end{algorithm}

The correctness of \textsf{DCluster} is presented below:
\begin{lemma}
  \label{lma:correctnessDCluster}
  Algorithm \textsf{DCluster} outputs the set $\{\Ccal^{(i)}\}_{0 \leq i \leq i_{\max}}$ and $F \subseteq E(G)$ satisfying properties (1)--(4) with probability $1 - O(\log n / n^2)$.
\end{lemma}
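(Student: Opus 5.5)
Property (1) is immediate from the algorithm, and property (2) follows once we show that every vertex of $X_i$ is removed by the end of iteration $i$. The remaining work is a bad-event argument for three families of events: (a) some vertex of $X_i \cap V(G_i)$ fails to join a cluster in iteration $i$; (b) some $|W_i|$ is too large; (c) some cluster chain is too long. Each will be bounded by $O(1/n^2)$ per iteration and per vertex, with a union bound over $n$ vertices and $O(\log n)$ iterations, giving $1 - O(\log n/n^2)$. For the vertex-level events I would bound by $O(1/n^3)$ and multiply by $n$; the resulting $O(1/n^2)$ per iteration gives the stated loss after summing over iterations. I will also use $\Delta_{i_{\max}} \ge n/2$. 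If $\Delta_i \le 6\ln n$, then all vertices are sampled as centers and the claims are trivial, so I assume $6\ln n/\Delta_i < 1$.

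For (a), fix $i$ and $v \in X_i \cap V(G_i) \setminus W_i$, so that $\Deg_G(v) > \Delta_i$. The sampling in line 5 is independent of everything that determined $G_i$, and so is $v$'s sampling of $Y$. I split into two cases according to $r := |N_G(v) \cap V(G_i)|$.
\begin{itemize}
\item If $r \ge \Deg_G(v)/2 > \Delta_i/2$, then $v$ fails to find a center with probability $(1-6\ln n/\Delta_i)^{r} \le e^{-3\ln n} = n^{-3}$.
\item Otherwise more than $\Deg_G(v)/2 > \Delta_i/2$ neighbors have already been removed. Each is put in $Y$ independently with probability $6\ln n/\Delta_i$, so none is picked with probability at most $n^{-3}$.
\end{itemize}
A vertex that joins a cluster through $y \notin V(G_i)$ joins the cluster of $y$, which was formed in an earlier iteration $j > i$. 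So every vertex of $X_i$ is clustered by iteration $i$. Every vertex of degree larger than $1 = \Delta_0$ lies in $X_0$ and is therefore clustered. A vertex of degree at most $1$ remaining at iteration $0$ is sampled as a center with probability $1$, since $\min\{1,6\ln n\}=1$.

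For property (2), a vertex lies in a class-$i$ cluster either as a center in $W_i$ or as a joiner attaching to a center in $W_i$. In either case it was still in $G_i$, so it was not in $X_{i+1}$. Hence $\Deg_G(v) \le \Delta_{i+1} = 2\Delta_i$. This is also why the clusters must be labeled by the class of their center. I would handle this subtlety, namely that vertices joining an old cluster through $y$ belong to that old cluster's class, by a careful statement. Such a $v$ has degree at most $2\Delta_i$, which can exceed the bound $2\Delta_j$ for the older class $j$ only if $j < i$, and that is impossible because $j > i$. So the bound holds with respect to the cluster's class.

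For property (4), $|W_i|$ is a sum of independent Bernoullis with mean at most $6n\ln n/\Delta_i$. A Chernoff bound with the upper target $12 n\ln n/\Delta_i$ fails with probability at most $e^{-\mu/3}$ when the mean is large. When the mean is small I would use the standard form $\Pr[X \ge 2\mu'] \le e^{-\mu'/3}$ with $\mu' = 6n\ln n/\Delta_i \ge 6\ln n$, since $\Delta_i \le n$. This gives at most $n^{-2}$.

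For property (3), I would show by induction that every vertex in a cluster created by iteration $i$ is at $F$-distance at most $i_{\max} - i$ hops of the "join old cluster" type, plus one hop per iteration, from its center. A joiner at iteration $i$ is either adjacent to its center, or adjacent to a vertex $y$ of an earlier cluster whose depth is at most $i_{\max}-i-1$ by induction. The depth is therefore at most $i_{\max}+1 \le \lceil\log_2 n\rceil - 1$, and the diameter is at most twice that, which is the stated bound. The main obstacle I expect is this diameter bookkeeping together with the class-labeling subtlety. The probabilistic parts are routine.
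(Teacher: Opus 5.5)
Your treatment of the covering claim and of properties (1), (2) and (4) follows the paper's proof. You use the same two-case argument, giving per-vertex failure probability $(1-p_i)^{\Delta_i/2}\le n^{-3}$, and the same Chernoff bound $\Pr[|W_i|\ge 2M]\le e^{-M/3}$ with $M=6n\ln n/\Delta_i$. Your handling of vertices that attach to an older cluster through $y$, and of vertices of degree at most $1$ at iteration $0$, is more explicit than the paper's.

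The problem is the final step for property (3). Your bookkeeping correctly shows that a vertex joining at iteration $i$ has depth at most $i_{\max}-i+1$: one hop to a center, plus at most $i_{\max}-i$ hops into older clusters. That gives depth at most $i_{\max}+1$ overall. You then assert $i_{\max}+1\le\lceil\log_2 n\rceil-1$, which is false, since $i_{\max}+1=\lceil\log_2 n\rceil$ by definition. As written, your argument only gives diameter at most $2\lceil\log_2 n\rceil$, which is $2$ more than property (3) requires. You also cannot simply use a depth bound of $i_{\max}-i$ for vertices of $X_i$, which is what the paper's proof states. A vertex attaching to a center at iteration $i_{\max}$ already has depth $1$, so that bound is itself off by one. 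The paper's final bound is nonetheless true, for the reason below.

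The missing observation is one you already made for a different purpose. At every iteration with $\Delta_i\le 6\ln n$, and in particular at $i=0$, we have $p_i=1$. Hence $W_i=V(G_i)$, no vertex joins a cluster there, and every vertex clustered at such an iteration is a center of depth $0$. So joins occur only at iterations $i\ge 1$, and every vertex has depth at most $i_{\max}-1+1=i_{\max}=\lceil\log_2 n\rceil-1$. Each non-center contributes exactly one $F$-edge, to its parent, so each cluster induces a tree in $(V(G),F)$. Its diameter is therefore at most $2\lceil\log_2 n\rceil-2$. With this correction your proof is complete.
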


\begin{proof}
  Property (1) is immediate. Since $|V(G_i)| \leq n$ trivially holds, we have
  $\mathbb{E}[|W_i|] \leq 6 n \ln n / \Delta_i =: M$.
  Since $\mathbb{E}[|W_i|] \leq M$, the Chernoff bound in the form
  $\Pr[X \geq 2M] \leq \exp(-M/3)$ (which holds for any $M \geq \mathbb{E}[X]$) yields
  \begin{align*}
    \Pr\left[|W_i| > \frac{12 n \ln n}{\Delta_i}\right]
    \leq \exp\left(-\frac{M}{3}\right)
    = \exp\left(-\frac{2 n \ln n}{\Delta_i}\right)
    \leq n^{-2},
  \end{align*}
  where the last inequality uses $\Delta_i \leq n$.
  By a union bound over all $i_{\max} + 1 \leq \lceil \log_2 n \rceil$ iterations,
  property (4) holds simultaneously for all $i$ with probability $1 - O(\log n / n^2)$.

  We prove property (2) by showing that vertices in $X_i$ are removed during iteration $i$.
  The proof is by induction on $i$. Let $p_i = \min\{1, 6\ln n / \Delta_i\}$ for short.
  (Basis) If $i = i_{\max}$, then $X_{i+1}$ is empty, and the statement holds trivially.
  (Inductive step) Suppose as the induction hypothesis that vertices in $X_{i+1}$ have been removed at the beginning of iteration $i$,
  and consider a vertex $v \in X_i$. If $|N_{G_i}(v)| \geq |N_{G}(v)|/2$, the probability
  that $v$ cannot find any neighbor in $W_i$
  is bounded by $(1 - p_i)^{|N_G(v)|/2} \leq (1 - p_i)^{\Delta_i/2} \leq n^{-3}$. Otherwise, $v$
  has at least $\Delta_i/2$ removed neighbors, and thus at Line~11 of the algorithm it samples
  a vertex $y$ that has already been removed with probability at least $1 - n^{-3}$. That is, $v$ joins $y$'s cluster.
  In either case, $v$ is removed.

  Finally, we prove property (3).
  It is easy to show that each cluster contains exactly
  one vertex in $W = \bigcup_{0 \leq i \leq i_{\max}} W_i$. Since each vertex $v \in X_i$ is
  connected to a vertex in $W_i$ or a vertex already removed in an earlier iteration (which belongs to $X_j$ for some $j > i$),
  it has distance at most $i_{\max} - i$ from its cluster center.
  Since $i \geq 0$, this distance is at most $i_{\max} = \lceil \log_2 n \rceil - 1$. This implies that the diameter of each cluster is at most $2(\lceil \log_2 n \rceil - 1)$, and thus
  property (3) also holds.
\end{proof}

While Algorithm~\ref{alg:clustering} is stated in a centralized manner, it is easy to implement
it in the CONGEST model. The steps in which communication is required are at Lines~7--9 and 11--14.
For Lines~7--9, sampled vertices send the information ``I'm a center'' to all neighbors,
and each vertex $v$ joins a cluster by replying to it. Lines~11--14 are similar: each vertex
probes randomly selected neighbors by sending messages, and the vertices receiving those messages
reply whether they have joined a cluster. We also bound the message complexity
of \textsf{DCluster} as follows:

\begin{lemma}
  \label{lma:message-complexity-DCluster}
  The message complexity of \textsf{DCluster} is at most $48 n \ln n \lceil \log_2 n \rceil$
  with probability $1 - O(\log n / n^2)$.
\end{lemma}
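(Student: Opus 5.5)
We bound the messages of \textsf{DCluster} iteration by iteration and then sum over the $i_{\max}+1 \leq \lceil \log_2 n \rceil$ iterations. Only Lines~7--9 and Lines~11--14 require communication. The target is a bound of $48 n \ln n$ messages per iteration, holding simultaneously for all $i$ with the stated probability.

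\textbf{Lines~7--9 (center announcements).} Each center $w \in W_i$ announces itself to its neighbors, and each $v \in X_i \setminus W_i$ replies to one chosen center. Naively the announcement costs $\sum_{w \in W_i} \Deg_G(w)$ messages, which can be large because a center need not have degree near $\Delta_i$. I would avoid this by reversing the direction of the communication. By \textsf{KT$_1$}, $v$ knows its neighbors' IDs, but it does not know which of them are centers. The cheap alternative is that $v$ probes neighbors: it checks membership in $W_i$ only through a sampled subset of its neighbors. Concretely, I would implement Lines~7--9 so that $v$ samples each neighbor with probability $p_i = \min\{1, 6\ln n/\Delta_i\}$, and a neighbor replies ``center'' only if it lies in $W_i$ (this reply can be folded into the Line~11 probe). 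The analysis in Lemma~\ref{lma:correctnessDCluster} still goes through, since with at least $\Delta_i/2$ remaining neighbors, each is independently a center with probability $p_i$.

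Even if the algorithm literally broadcasts from centers, I would argue the cost is still small. Every vertex still present in $G_i$ has degree at most $2\Delta_i$ after the earlier iterations, except possibly low-degree vertices. Hence only surviving vertices are sampled, and a center's degree is at most $2\Delta_i$ whenever it is not in a higher class. That gives a cost of $|W_i| \cdot 2\Delta_i \leq 24 n\ln n$ by property~(4). One caveat: the inductive claim that all of $X_{i+1}$ has been removed, established in the proof of Lemma~\ref{lma:correctnessDCluster}, is exactly what ensures that every vertex of $G_i$ has degree at most $\Delta_{i+1} = 2\Delta_i$. So on the success event of that lemma, every center $w \in W_i \subseteq V(G_i)$ satisfies $\Deg_G(w) \leq 2\Delta_i$, and the announcements cost at most $2\Delta_i |W_i| \leq 24 n \ln n$. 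The replies add at most $n$ more messages.

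\textbf{Lines~11--14 (probing).} Each $v \in X_i \setminus W_i$ samples $Y$ with each neighbor included with probability $p_i$, so $\mathbb{E}|Y| \leq 2\Delta_i p_i \leq 12 \ln n$, using $\Deg_G(v) \leq 2\Delta_i$. Summed over at most $n$ vertices, the expected number of probes is at most $12 n\ln n$. Since the total is a sum of independent Bernoulli variables, the Chernoff bound $\Pr[X \geq 2M] \leq e^{-M/3}$ with $M = 12n\ln n$ gives at most $24 n\ln n$ probes except with probability $n^{-2}$. Each probe receives one reply, so this step costs at most $48 n\ln n$ messages, which dominates the budget. I would therefore aim to show that the announcement and probe costs together stay within $48 n \ln n$ per iteration, tightening constants if needed. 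For example, announcements fall below $24n\ln n$ because of how the Chernoff bound in property~(4) is stated. Taking a union bound over the $\lceil \log_2 n\rceil$ iterations and the success event of Lemma~\ref{lma:correctnessDCluster} yields failure probability $O(\log n/n^2)$.

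\textbf{Main obstacle.} The key step is the degree bound $\Deg_G(w) \leq 2\Delta_i$ for centers and probing vertices, which must be derived from the correctness induction, together with careful constant bookkeeping to land exactly on $48 n\ln n\lceil\log_2 n\rceil$.
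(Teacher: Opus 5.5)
Your argument is the paper's own: you bound Lines~7--9 by $2\Delta_i|W_i| \leq 24 n\ln n$ using property~(4) and the fact, inherited from the correctness induction, that every vertex of $G_i$ has $\Deg_G \leq \Delta_{i+1} = 2\Delta_i$. You bound Lines~11--14 via $\mathbb{E}|Y_v| \leq 12\ln n$ and a Chernoff bound, and then union-bound over the $\lceil \log_2 n\rceil$ iterations.

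The one step that does not close is the constant. Your pieces give $24n\ln n + n$ (announcements and join replies) plus $48 n\ln n$ (probes and probe replies), i.e.\ roughly $72 n\ln n$ per iteration. Your suggested remedy, that announcements fall below $24 n\ln n$ ``because of how the Chernoff bound in property~(4) is stated,'' is not an argument. Property~(4) gives exactly $|W_i| \leq 12 n\ln n/\Delta_i$, and any positive announcement bound added to your $48 n\ln n$ probe bound already exceeds the per-iteration target. The paper's proof contains the same slip: it derives $24 n\ln n$ and $48 n\ln n$ for the two steps and then states the total as $48 n\ln n\lceil\log_2 n\rceil$. So this is not a missing idea relative to the paper. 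What both arguments actually establish is about $(72 n\ln n + n)\lceil \log_2 n\rceil = O(n\log^2 n)$ messages with probability $1 - O(\log n/n^2)$. That is all that is used downstream, since Lemma~\ref{lma:balancedcluster} only needs $\tilde{O}(n)$. The literal constant $48$ would require a sharper count than either argument gives.

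Separately, discard the ``reverse-direction'' implementation of Lines~7--9. Apart from changing the algorithm the lemma is about, it breaks correctness. A neighbor is recognized as a center only if it is both in $v$'s $p_i$-sample and in $W_i$, which has probability $p_i^2$. So $\Delta_i/2$ surviving neighbors yield only $18\ln^2 n/\Delta_i$ expected hits, which is $o(1)$ once $\Delta_i \gg \log^2 n$. The literal broadcast analysis you give afterwards is the correct one and suffices. Also, the hedge ``except possibly low-degree vertices'' is unnecessary: once $X_{i+1}$ has been removed, every vertex of $G_i$, whatever its degree, satisfies $\Deg_G \leq 2\Delta_i$.
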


\begin{proof}
  We analyze the two communication steps in each iteration $i$ separately. (\textbf{Lines~7--9})
  Each vertex in $W_i$ broadcasts the message ``I'm a center'' to all its neighbors in $G_i$.
  Since $G_i$ contains no vertex in $X_{i+1}$, every vertex of $G_i$ has degree at most $2\Delta_i$. The number of messages in this step is at most $|W_i| \cdot 2\Delta_i$.
  By Lemma~\ref{lma:correctnessDCluster}, $|W_i| \leq 12 n \ln n / \Delta_i$ holds for all $i$
  simultaneously with probability $1 - O(\log n / n^2)$,
  which implies at most $24 n \ln n$ messages per iteration. (\textbf{Lines~11--14})
  Each vertex $v \in X_i \cap V(G_i)$ sends probe messages to neighbors sampled from $N_G(v)$
  with probability $6 \ln n / \Delta_i$ per neighbor, and receives one reply per probe.
  Since $\Deg_G(v) \leq 2\Delta_i$, we have $\mathbb{E}[|Y_v|] \leq 12 \ln n$. Thus the expected number of messages per vertex is at most $24 \ln n$.
  Summing over at most $n$ vertices gives expected total at most $24 n \ln n$ per iteration.
  Applying a Chernoff bound to the sum of independent sampling variables,
  the actual total exceeds $48 n \ln n$ with probability at most $e^{-8n \ln n} = n^{-8n}$ per iteration. Taking a union bound over $i_{\max} + 1 \leq \lceil \log_2 n \rceil$,
  the total message complexity is at most $48 n \ln n \cdot \lceil \log_2 n \rceil$ with probability
  at least $1 - O(\log n /n^2)$.
\end{proof}

\subsection{Balancing Size} \label{subsec:balancing}

Since the clusters in $\Ccal^{(i)}$ are not necessarily balanced in size, some of them might be much
larger than $\Delta_i$. As a post-processing step, we split a large cluster $C$ into small clusters of size $\Theta(\Delta_i)$.
To preserve the connectivity of each cluster, we allow the resulting clusters to overlap slightly. To this end, we use
the algorithm \textsc{Split} from~\cite{IKNS22}, which splits a rooted spanning tree $T_C$ of $C$ in a bottom-up fashion.
It outputs a collection of subtrees that have asymptotically the same size and may share their roots.

\begin{lemma}[Izumi, Kitamura, Naruse, Schwartzman~\cite{IKNS22}] \label{lma:treesplit}
  Let $T$ be a spanning tree of height $h$, and let $k > 1$ be an integer parameter.
  There exists a deterministic CONGEST algorithm that outputs a collection of subtrees
  $\{T_1, T_2, \dots, T_j\}$ such that the following conditions hold:
  \begin{enumerate}
    \item For every $i \in [1, j]$, $k \leq |V(T_i)| \leq 3k$.
    \item For any two distinct indices $i_1, i_2 \in [1, j]$, $V(T_{i_1}) \cap V(T_{i_2})$
          is either empty or a singleton consisting of the common root of $T_{i_1}$ and $T_{i_2}$.
  \end{enumerate}
  Let $h' = \max_{1 \leq i \leq j} D_{T_i}$. The running time of this algorithm is $O(h')$.
  The message complexity is $O(|V(T)|)$.
\end{lemma}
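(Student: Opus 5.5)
We prove Lemma~\ref{lma:treesplit} constructively with a single bottom-up greedy pass over $T$ (essentially the \textsc{Split} procedure of~\cite{IKNS22}). We may assume $|V(T)| \geq k$; otherwise the statement is vacuous or $T$ is returned as is. Each vertex $v$ keeps a counter $r(v)$, the number of vertices of its subtree not yet assigned to an output subtree (the \emph{residual} subtree), and $v$ itself is counted in $r(v)$. A leaf sets $r(v)=1$.

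When all children $c_1,\dots,c_d$ of $v$ have reported their residuals, each satisfying $r(c_\ell) < k$, vertex $v$ scans them in arbitrary order and accumulates a group. When the accumulated count reaches at least $k$, it closes the group by emitting the subtree consisting of $v$ together with the residual subtrees of that group. Since each residual is below $k$, a closed group holds fewer than $2k$ non-root vertices, so the emitted subtree has between $k+1$ and $2k$ vertices. The unfinished last group, of size below $k$, together with $v$ forms the new residual, so $r(v) \leq k$. If $r(v) = k$, we emit it as a subtree rooted at $v$ and set $r(v)=0$.

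At the root, a leftover residual of size below $k$ is merged into any emitted subtree rooted at the root, or into the last emitted group that is adjacent to it through the root. This raises that subtree's size to at most $2k + k = 3k$, which gives condition~1. Condition~2 holds because a vertex belongs to exactly one residual (and hence to at most one subtree as a non-root vertex), and the only vertices shared between subtrees are the roots $v$ at which several groups were closed.

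The leftover-at-root case is the main obstacle. The residual at the root may not attach to any emitted subtree rooted at the root, for instance when no group closed there. The fix is to have the root absorb its residual into the subtree emitted by one of its children, which contains that child. Its root then becomes the common root, and the size bound still holds because the residual together with the connecting path has fewer than $k$ vertices. Verifying connectivity, the $3k$ bound and the singleton-intersection property in this boundary case needs care.

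For the complexity, each vertex sends only the value $r(v)$ to its parent and receives a single "group id / root id" assignment back. This is $O(1)$ messages per tree edge, so $O(|V(T)|)$ in total. The counters propagate up in $O(h)$ rounds. To get $O(h')$ rounds, we note that every residual subtree lies inside some output subtree and thus has depth at most $h'$. So we run the computation in a pipelined way where each subtree finishes its aggregation independently, and a vertex waits only for its residual descendants, not for the entire tree.
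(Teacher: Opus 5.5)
The paper only cites this lemma from~\cite{IKNS22}, and a bottom-up greedy pass is the right kind of procedure. However, your grouping rule violates condition~2, and not only at the root. Suppose at least one group closes at $v$, and the unfinished remainder together with $v$ is passed upward with $1 \le r(v) < k$. Then $v$ is the root of every group closed at $v$, but an ordinary member of whichever subtree later absorbs $v$'s residual, and one such subtree can absorb several of these vertices.

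Concretely, take $k=2$ and a root $\rho$ with children $v_1,v_2$, each having two leaf children $a_\ell,b_\ell$. Each $v_\ell$ closes $\{v_\ell,a_\ell,b_\ell\}$ and reports $r(v_\ell)=1$, and then $\rho$ closes $\{\rho,v_1,v_2\}$. This subtree meets the other two in $\{v_1\}$ and in $\{v_2\}$, so it cannot have a common root with both, however roots are chosen. Your sentence that the only shared vertices are roots at which several groups were closed is true. It does not give condition~2, though, because such a $v$ is also shared with a subtree of which it is not the root.

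The missing idea is never to pass $v$ upward once a group has closed at $v$. Fold the unfinished remainder (fewer than $k$ vertices) into one of the groups closed at $v$, giving at most $(2k-1)+(k-1)=3k-2$ vertices (this is where the constant $3$ comes from), and report $r(v)=0$. Then $v$ goes up only if no group closes at it, and every vertex lying in several output subtrees is the root of all of them.

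With this change your root repair must be redone. The emitted subtree next to the root residual $R$ may already have $3k-2$ vertices, so absorbing $R$ into it can exceed $3k$. Instead, pick $c \notin R$ whose parent lies in $R$ (so $r(c)=0$), and treat $R$ as one more piece of fewer than $k$ vertices hanging at $c$. Then redo the greedy grouping at $c$. All pieces are below $k$ and sum to at least $k$, so every group has between $k+1$ and $3k-2$ vertices, and any two meet only in $c$, which you declare their root. Roots have to be designated rather than induced by $T$: if the root $\rho$'s only child has $10k$ leaf children, every output subtree must contain that child, while the one containing $\rho$ has $T$-induced root $\rho$.

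Finally, your pipelining argument for $O(h')$ rounds has no mechanism behind it. The value $r(v)$ depends on the entire subtree of $v$ (on a path it is the subtree size modulo $k$). So $v$ cannot report before information from its deepest descendant arrives, and the pass takes $\Theta(h)$ rounds. In fact $O(h')$ is unattainable in general. On a rooted path with $k=2$ we have $h' \le 5$, yet such a segmentation yields a $3$-coloring of the oriented path in $O(1)$ further rounds, contradicting Linial's $\Omega(\log^* n)$ lower bound for deterministic algorithms. Claim $O(h)$ rounds instead. That is all the paper needs, since it applies the lemma to cluster trees of height $O(\log n)$ (Lemma~\ref{lma:balancedcluster}).
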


In our case, the subgraph induced by each cluster $C \in \Ccal^{(i)}$ is equipped with an $O(\log n)$-diameter spanning tree, so that we can apply this algorithm with $k = \Delta_i$. To maintain vertex-disjointness of clusters, we replace each vertex shared by $x$ subtrees with a virtual
star graph with $x$ leaves simulated by their common root. Let $G'$ be the graph after
this replacement. Since all the large clusters in $\Ccal^{(i)}$ are collectively split into at most
$n / \Delta_i$ subclusters, the increase in the number of clusters is bounded by $n / \Delta_i$.
Combined with Property 4 of \textsf{DCluster}, which bounds the original cluster count by
$12n \ln n / \Delta_i$, the total number of clusters in $\Ccal^{(i)}$ is at most
$12n \ln n / \Delta_i + n / \Delta_i \leq 13n \ln n / \Delta_i$.
The increase in the number of vertices and edges is also bounded by the number of new clusters
created, i.e., by $O(n)$. Hence, the asymptotic size of $G'$ (in terms of the number of
vertices and edges) is the same as $G$.

Note that $G$ can simulate any algorithm running on $G'$
with the same complexity. In addition, once we construct
an $(\alpha, \beta)$-spanner on $G'$, the corresponding subgraph in $G$ is also an
$(\alpha, \beta)$-spanner. Hence, in the following argument we focus on the construction of an
$(\alpha, \beta)$-partition in $G'$. To avoid introducing further notations,
we abuse notation and use $G$ to denote the graph after this replacement, where $V(G)$ is partitioned
into a family of disjoint cluster sets $\{\Ccal^{(i)}\}_{0 \leq i \leq i_{\max}}$.
We summarize the properties of the clustering in the following lemma.

\begin{lemma} \label{lma:balancedcluster}
  Let $G$ be the graph after the replacement described in this subsection.
  Then there exists a CONGEST algorithm that outputs a collection of disjoint cluster sets
  $\{\Ccal^{(i)}\}_{0 \leq i \leq i_{\max}}$ and an edge subset $F \subseteq E(G)$ satisfying the
  following properties with probability at least $1 - O(\log n / n^2)$:
  \begin{enumerate}
    \item $\Ccal^{(i)}$ for all $i$ are mutually disjoint.
    \item For any vertex $v$ in a cluster $C \in \Ccal^{(i)}$,
          $\Deg_G(v) \leq 2\Delta_{i}$.
    \item The diameter of the subgraph of $(V(G),F)$ induced by each cluster is at most $2\lceil\log_2 n\rceil - 2$.
    \item $|\Ccal^{(i)}| \leq 13n \ln n / \Delta_i$.
    \item For every $C \in \Ccal^{(i)}$, $|C| \leq 3\Delta_i$.
  \end{enumerate}
  The algorithm runs in $O(\log n)$ rounds, and uses $\tilde{O}(n)$ messages.
\end{lemma}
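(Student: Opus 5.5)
The algorithm runs \textsf{DCluster} and then applies the splitting procedure of Lemma~\ref{lma:treesplit} to every cluster, followed by the virtual-star replacement. Properties (1)--(5) are then checked one by one against Lemma~\ref{lma:correctnessDCluster}, and complexities against Lemmas~\ref{lma:message-complexity-DCluster} and~\ref{lma:treesplit}. First, by Lemma~\ref{lma:correctnessDCluster}, with probability $1-O(\log n/n^2)$ the output of \textsf{DCluster} satisfies its properties (1)--(4). Each cluster $C\in\Ccal^{(i)}$ comes with a spanning tree inside $F$: every non-center vertex has exactly one parent edge in $F$, pointing either to a center in $W_i$ or to an earlier-removed vertex. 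This tree is rooted at the center and has height at most $\lceil\log_2 n\rceil-1$. Every vertex knows its parent, so the tree is available locally. For clusters with $|C|\ge \Delta_i$ we run \textsc{Split} with $k=\Delta_i$. Clusters with $|C|<\Delta_i$ are left untouched and satisfy $|C|\le 3\Delta_i$ trivially. For split clusters, property 1 of Lemma~\ref{lma:treesplit} gives $|T_j|\le 3\Delta_i$, which is property (5). One caveat: \textsc{Split} may leave a residual piece near the root. I would make sure it is absorbed, so that the subtrees cover $C$; I rely on the cited lemma implicitly guaranteeing coverage.

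Next, the structural properties are preserved under the replacement. A shared root $r$ lies in $x$ subtrees and is replaced by a star of $x$ virtual copies, one placed in each subtree. Property (1), disjointness across different $i$, is inherited, since splitting happens within a single class. Property (2) needs care: a virtual copy lies in the same class and is simulated by $r$. I would define the copy's edges so that only its subtree-edges and its star edge are attached to it, and all original $G$-edges of $r$ are kept on one designated copy. Then each copy's degree is at most $\Deg_G(r)\le 2\Delta_i$, up to an additive $O(1)$ that can be absorbed. This is the step I expect to be the most delicate, because the statement literally claims $\Deg_G(v)\le 2\Delta_i$ in the new graph. Property (3) follows because each subtree is a subtree of a tree of height at most $\lceil\log_2 n\rceil-1$ rooted at its own top vertex, so its diameter is at most $2\lceil\log_2n\rceil-2$. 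The star edges connect pieces across clusters only and are added to $F$ harmlessly.

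For property (4), each new subtree beyond the first in an original cluster has at least $k=\Delta_i$ vertices that are not shared with other subtrees except the roots. Hence the number of extra clusters in class $i$ is at most $n/\Delta_i$ (more carefully, at most $2n/\Delta_i$ counting roots, which is still absorbed). Together with $12n\ln n/\Delta_i$ and $\ln n\ge 1$ this gives $13n\ln n/\Delta_i$. For complexity, \textsf{DCluster} runs $O(\log n)$ iterations of $O(1)$ rounds each. \textsc{Split} runs in $O(h')=O(\log n)$ rounds, with all clusters processed in parallel since they are vertex-disjoint in $G$. Messages total $\tilde O(n)$ by Lemma~\ref{lma:message-complexity-DCluster} plus $\sum_C O(|C|)=O(n)$ for \textsc{Split}. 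The simulation of virtual vertices by their root costs no extra rounds beyond a constant factor. Finally, a union bound over the failure events gives success probability $1-O(\log n/n^2)$.
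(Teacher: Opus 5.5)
Your proposal is correct and follows the paper's own argument: run \textsf{DCluster}, apply \textsc{Split} with $k=\Delta_i$ to each cluster's $O(\log n)$-height parent-pointer tree, replace shared roots by virtual stars, and bound the number of extra clusters by $O(n/\Delta_i)$ using the $\Delta_i$-size lower bound on the pieces, which gives $13n\ln n/\Delta_i$. Your additional care on two points the paper leaves implicit is sound. The number of new pieces is really at most $n/(\Delta_i-1)\le 2n/\Delta_i$, which is still absorbed for $n\ge 8$. The additive $O(1)$ in the copies' degrees also disappears if you make the designated copy the star center, because every other copy, and the center itself, carries at least one child edge of $r$, so every copy has degree at most $\Deg_G(r)\le 2\Delta_i$.
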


As explained in Section~\ref{subsec:outline}, the following partitioning algorithm works on
top of $\widehat{G}$, which is the multigraph with self-loops obtained from $G$ (of this lemma) by contracting each cluster $C \in \Ccal := \bigcup_i \Ccal^{(i)}$ into a supernode.
In the following subsection, we identify $V(\widehat{G})$ with $\Ccal$, and use the terms ``cluster'' and ``supernode'' interchangeably. When focusing on the vertex set of the original
graph $G$, we often abuse notation and identify a vertex subset $X \subseteq \Ccal$ of $\widehat{G}$ with its union.

\subsection{Partitioning Algorithm}

We present our main partitioning algorithm $\textsf{Partition}(\kappa, \lambda)$
($0 \leq \lambda \leq \kappa \leq 1/2$), which provides an $(\alpha ,\beta)$-partition with
$\alpha = \tilde{O}(n^{\lambda})$ and $\beta = \tilde{O}(n^{1 - \kappa - \lambda} + n^{1 - 2\kappa + \lambda})$. The algorithm attains $\tilde{O}(\alpha + \beta + n^{\kappa})$ round complexity
and $\tilde{O}(n^{1+\kappa})$ message complexity. The algorithm focuses on the construction of an $(\alpha, \beta)$-partition of $\widehat{G}$ because it trivially provides a
$(\tilde{O}(\alpha), \tilde{O}(\beta))$-partition of the original graph $G$.

Here, we formally define the notation mentioned in Section~\ref{subsec:outline}.
Given a set of clusters $X \subseteq \Ccal$, we write $X^{(i)}$ for the subset consisting of all class-$i$ supernodes,
i.e., $X^{(i)} := X \cap \Ccal^{(i)}$. Given a partition $\Pcal$ of $\widehat{G}$ with boundary vertices $B \subseteq V(\widehat{G})$, $P_B$ denotes all the boundary vertices in $P \in \Pcal$ (i.e. $P_B = P \cap B$), and define $P_M = P \setminus P_B$.
Let $t = \lfloor \log(n^{\kappa+\lambda}\log n) \rfloor$, so that $\Delta_i > n^{\kappa+\lambda}\log n$ if and
only if $i > t$. We define $\Ccal^H = \bigcup_{i > t} \Ccal^{(i)}$, the set of
\emph{high-degree} clusters of degree class $\Delta_i > n^{\kappa+\lambda}\log n$.
A set $X$ of supernodes is said to have \emph{moderate size for class $i$}, or to be \emph{$i$-moderate},
if $X \subseteq \Ccal^{(i)}$ and $|X| \leq n^{2\kappa}\log n / \Delta_i$. A part $P \in \Pcal$ is called
\emph{admissible} if it satisfies both of the following conditions:
\begin{itemize}
  \item[(C1)] $P_B \cap \Ccal^H = \emptyset$. That is, $P$ has no high-degree cluster on its boundary.
  \item[(C2)] $P^{(i)}$ is $i$-moderate for every $0 \leq i \leq i_{\max}$ (equivalently, $|P^{(i)}| \leq n^{2\kappa}\log n / \Delta_i$ for every $i$).
\end{itemize}
Note that condition~(C2) implies $|P^{(i)}| = 0$ whenever $\Delta_i > n^{2\kappa}\log n$, i.e., an admissible part
contains no cluster of degree class exceeding $n^{2\kappa}\log n$.

Our algorithm works as follows:
\begin{enumerate}
  \item Run the clustering algorithm of Lemma~\ref{lma:balancedcluster}.
  \item Each vertex in a cluster $C \in \Ccal^{(i)}$ samples its incident edges independently with
        probability $q_i = \min\Bigl\{1, \frac{3 n^{\kappa} \log n}{\Delta_i}\Bigr\}$. Let $J$ be the set of
        all sampled edges, and let $\widetilde{G} = (V(\widehat{G}), J)$. All the edges in $J$ are added to $F$.
  \item Apply $\textsf{MPX}(n^{\lambda})$ to $\widetilde{G}$.
  \item Let $\Pcal$ be the partition of $V(\widetilde{G})$ ($ = V(\widehat{G})$) output by
        the third step. For each admissible $P\in \Pcal$, apply
        one of the following processes:
        \begin{itemize}
          \item If $\Deg_{\widehat{G}}(P) \leq |V(G[P])| n^{\kappa} (\log n + 1)^3$ holds,
                add all the edges in $I_{\widehat{G}}(P)$ to $F$ (where $|V(G[P])|$ represents
                the number of the original vertices (i.e. vertices in $G$) contained in $P$).
          \item Otherwise, add all the edges in $I_{\widehat{G}}(P_B)$ to $F$. Then apply $\textsf{OutDetect}(n^{\kappa} (\log n + 1)^3)$ to $P_M$ on $\widehat{G} - F$. If the algorithm detects all the edges in
                $\partial_{\widehat{G} - F}(P)$, they are added to $F$.
        \end{itemize}
\end{enumerate}

It is not difficult to see that this algorithm is easily implemented in the \textsf{CONGEST-KT$_1$} model.
The first step is obvious from Lemma~\ref{lma:balancedcluster}. The second step is implemented
by independent random
sampling of edges incident to each vertex $u$ in a class-$i$ cluster with probability $q_i$. The third step
mostly follows Theorem~\ref{thm:diameter-d-decomposition}, but we need to be careful because we run it on the contracted graph $\widetilde{G}$. Fortunately, this does not cause any issue: The algorithm $\textsf{MPX}(d)$ is implemented by running the multi-source BFS once. Hence it is easily executed on top of any contracted graph. For the fourth step, each part $P$ checks whether
the condition $\Deg_{\widehat{G}}(P) \leq |V(G[P])| n^{\kappa} (\log n + 1)^3$ holds.
Since $P$ admits a spanning tree of height $O(n^{\lambda} \log n)$, the check process is
implemented using the standard aggregation. If the condition holds, every vertex in $P$ adds
all of its incident edges to $F$ (note that the information of $\{u, v\} \in F$ is shared with
$u$ and $v$ by their direct communication). Otherwise, $P$ activates $\textsf{OutDetect}$,
which follows Theorem~\ref{thm:outdetect}.

\subsection{Correctness of Algorithm \textsf{Partition}}
We show that the output $\Pcal$ and $F$ satisfy the conditions of
Lemma~\ref{lma:spanner-from-partition}.

\begin{lemma} \label{lma:highsize}
  With probability at least $1 - O(1/\log n)$,
  $|B^{(i)}| \leq 52(\log n + 1)^3 \cdot n^{1 - \lambda} / \Delta_i$ holds for any $i \in [0, i_{\max}]$.
\end{lemma}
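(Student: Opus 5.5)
We bound $\mathbb{E}[|B^{(i)}|]$ for each class $i$ and then apply Markov's inequality with a union bound over the $i_{\max}+1 \le \lceil \log_2 n\rceil$ classes. Here $B$ is the set of boundary supernodes produced by $\textsf{MPX}(n^{\lambda})$ on $\widetilde{G}$. The target bound already contains a factor $(\log n+1)^3$, which leaves room for a $(\log n+1)$ factor lost to Markov per class, and the union bound over $O(\log n)$ classes is what produces the failure probability $O(1/\log n)$.

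\textbf{Step 1 (conditioning).} We condition on the success event of Lemma~\ref{lma:balancedcluster}, which holds with probability $1-O(\log n/n^2)$. On this event $|\Ccal^{(i)}| \le 13 n\ln n/\Delta_i$ for every $i$. The clustering is determined before the edge sampling of Step~2 and before the randomness of \textsf{MPX}, so $\Ccal^{(i)}$ is a fixed set when we analyze \textsf{MPX}.

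\textbf{Step 2 (per-supernode probability).} We apply property~2 of Theorem~\ref{thm:diameter-d-decomposition} to the graph $\widetilde{G}$. For any fixed $J$, every supernode $C$ becomes a boundary vertex with probability at most $2n^{-\lambda}$ over the internal randomness of \textsf{MPX}. Averaging over $J$ gives the same bound unconditionally. This is the step to check most carefully, because the guarantee must hold for an arbitrary (multi)graph input, including one with self-loops. Self-loops never cross parts, and parallel edges do not change the exponential-shift clustering of \textsf{MPX}, so the vertex-boundary guarantee (Appendix~\ref{appendix:MPX}) still applies. Linearity of expectation then gives
\[
\mathbb{E}\bigl[|B^{(i)}|\bigr] \le 2n^{-\lambda}\cdot \frac{13 n\ln n}{\Delta_i} = \frac{26\, n^{1-\lambda}\ln n}{\Delta_i}.
\]

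\textbf{Step 3 (Markov and union bound).} For each $i$, Markov's inequality with threshold $52(\log n+1)^3 n^{1-\lambda}/\Delta_i$ gives failure probability at most
\[
\frac{26\ln n}{52(\log n+1)^3} \le \frac{1}{2(\log n+1)^2},
\]
using $\ln n \le \log n+1$. A union bound over at most $\lceil\log_2 n\rceil$ classes gives total failure probability $O(1/\log n)$. Adding the $O(\log n/n^2)$ failure probability of the conditioning event keeps the total at $O(1/\log n)$.

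We cannot instead use a Chernoff bound to obtain high probability. The boundary events of \textsf{MPX} are correlated across supernodes, so Markov's inequality is the tool available here. This explains why the lemma is stated with probability only $1-O(1/\log n)$.
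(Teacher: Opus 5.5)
Your proof is correct and follows the paper's argument exactly: you bound $\mathbb{E}[|B^{(i)}|] \le 2n^{-\lambda}|\Ccal^{(i)}| \le 26\, n^{1-\lambda}\ln n/\Delta_i$ using the per-vertex boundary probability of \textsf{MPX} together with Lemma~\ref{lma:balancedcluster}(4), then apply Markov's inequality per class and a union bound over the $\lceil \log_2 n\rceil$ classes. The differences are only presentational: you make explicit two points the paper leaves implicit (conditioning on the clustering's success event, and that the \textsf{MPX} guarantee applies to the contracted multigraph $\widetilde{G}$), and you apply Markov directly at the final threshold rather than at $52(\log n+1)^2\ln n\cdot n^{1-\lambda}/\Delta_i$ followed by $\ln n\le \log n+1$, which is the same computation.
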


\begin{proof}
  By Theorem~\ref{thm:diameter-d-decomposition}, we have
  $\mathbb{E}[|B^{(i)}|] \leq 2 n^{-\lambda} |\Ccal^{(i)}|$.
  By Lemma~\ref{lma:balancedcluster}(4), $|\Ccal^{(i)}| \leq 13n \ln n / \Delta_i$,
  and thus $\mathbb{E}[|B^{(i)}|] \leq 26 n^{1 - \lambda} \ln n / \Delta_i$.
  Applying Markov's inequality, we obtain
  \begin{align*}
    \Pr\left[|B^{(i)}| > 52(\log n + 1)^2 \ln n \cdot \frac{n^{1-\lambda}}{\Delta_i}\right]
    \leq \frac{1}{2(\log n + 1)^2}.
  \end{align*}
  Applying a union bound over all $i_{\max} + 1 \leq \lceil \log_2 n \rceil$ values of $i$,
  the above inequality fails for some $i \in [0, i_{\max}]$ with probability at most
  \begin{align*}
    \frac{\lceil \log_2 n \rceil}{2(\log n + 1)^2} = O\!\left(\frac{1}{\log n}\right).
  \end{align*}
  Hence, with probability at least $1 - O(1/\log n)$,
  $|B^{(i)}| \leq 52(\log n + 1)^3 \cdot n^{1-\lambda} / \Delta_i$
  holds for all $i \in [0, i_{\max}]$ simultaneously.
\end{proof}

\begin{lemma} \label{lma:outgoingEdge}
  \sloppy{
    With probability at least $1 - \tilde{O}(1/n)$, the following statement holds:
    For any $j, k \in [0, i_{\max}]$, any $j$-moderate set $S$, and any $k$-moderate set $U$,
    if $\partial_{\widehat{G}}(S, \Ccal^{(k)} \setminus U)$ contains
    at least $n^{\kappa} (\ln n + 1)$ edges, at least one of them is included in $J$.
  }
\end{lemma}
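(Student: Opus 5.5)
The plan is a union bound over all pairs of moderate sets, where each fixed pair fails with probability so small that it beats the number of such pairs. The key observation is that the sampling in step~2 of \textsf{Partition} is performed on edges of $G$ (after the replacement). It is independent of everything that defines $S$, $U$, and $\widehat{G}$, since the clusters $\Ccal^{(i)}$ are fixed before sampling. Therefore, for a \emph{fixed} pair $(S,U)$, the event ``no edge of $\partial_{\widehat{G}}(S,\Ccal^{(k)}\setminus U)$ is sampled'' is a product of independent events. The correlation with \textsf{MPX} never enters, because the statement is quantified over all moderate sets, not over parts of $\Pcal$.

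Fix $j,k$, a $j$-moderate $S$ and a $k$-moderate $U$ with $|\partial_{\widehat{G}}(S,\Ccal^{(k)}\setminus U)| \ge N := n^{\kappa}(\ln n+1)$. Each such edge has an endpoint inside a class-$j$ cluster, and that endpoint samples it with probability $q_j$. The endpoint in a class-$k$ cluster may also sample it, which only helps. If $q_j=1$ the failure probability is $0$. Otherwise each edge escapes with probability at most $1-q_j$, independently across distinct edges (distinct edges of $G$ are sampled independently). The failure probability is thus at most
\begin{align*}
(1-q_j)^{N} \le \exp\Bigl(-\tfrac{3n^{\kappa}\log n}{\Delta_j}\cdot n^{\kappa}(\ln n+1)\Bigr) = \exp\Bigl(-\tfrac{3n^{2\kappa}\log n\,(\ln n+1)}{\Delta_j}\Bigr).
\end{align*}

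Next I count the pairs. The number of $j$-moderate sets is at most $\sum_{s\le s_j}\binom{|\Ccal^{(j)}|}{s} \le (n+1)^{s_j}$, where $s_j = n^{2\kappa}\log n/\Delta_j$, giving at most $\exp(s_j\ln(n+1))$. Similarly there are at most $\exp(s_k\ln(n+1))$ sets $U$. The expected obstacle is that $s_k$ is governed by $\Delta_k$, not $\Delta_j$, so the exponent from $U$ need not be dominated by the decay in $\Delta_j$.

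To handle this, I will avoid enumerating $U$ in full. It suffices that the failure event depends on $U$ only through which of the at most $s_j\cdot 2\Delta_j\cdot 3\Delta_j$ neighbours of $S$ it removes. More cleanly, I will choose a witness set $E'$ of exactly $N$ crossing edges and union bound over $(S, E')$. $E'$ is determined by $S$ together with a choice of $N$ edges among the $O(s_j\Delta_j^2)$ edges incident to $S$, which contributes a factor $\exp(N\cdot O(\log n))$. This is still too large against $\exp(-q_jN)$ when $q_j$ is small. So the cleanest fix is to let $U$ appear only via $S$: the set $\Ccal^{(k)}\setminus U$ contains all class-$k$ neighbours of $S$ except those in $U$. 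Alternatively, I can note that the only $U$'s that matter are those contained in $N_{\widehat{G}}(S)$, whose size is at most $6s_j\Delta_j^2$. With this restriction, the count of relevant $U$ is at most $\exp(s_k\ln(6s_j\Delta_j^2+1))$, and I will still need $s_k\ln n \lesssim s_j\ln n$ in the application. This is the main obstacle. My first attempt will be to show that the exponent $3s_j\,(\ln n+1)$ from the failure probability beats $s_j\ln(n+1)$ from $S$ plus the $U$ term, using that $U$ may be replaced by $U\cap N_{\widehat{G}}(S)$ with $|U\cap N(S)|\le\min(s_k,\,\deg(S))$. If needed, I will shrink $N$'s constant by bounding $s_k$ in terms of $s_j$. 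Finally I union bound over the $O(\log^2 n)$ choices of $(j,k)$ to obtain failure probability $\tilde{O}(1/n)$.
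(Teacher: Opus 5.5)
You have the right skeleton: the step-2 coins are independent of the clustering, so you can take a per-pair failure bound, union bound over $(S,U)$, and then union bound over the $O(\log^2 n)$ pairs $(j,k)$. But there is a genuine gap exactly where you flagged the obstacle.

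You lower-bound each crossing edge's sampling probability only by $q_j$. That gives decay $\exp(-3s_j(\ln n+1))$ with $s_j=n^{2\kappa}\log n/\Delta_j$, while your count of $k$-moderate sets $U$ is $\exp(s_k\ln(n+1))$. When $k<j$ the ratio $s_k/s_j=\Delta_j/\Delta_k$ is unbounded, so none of your repairs works:
\begin{itemize}
  \item The witness-set count fails, as you note yourself.
  \item ``Bounding $s_k$ in terms of $s_j$'' is impossible, because $j$ and $k$ are independent indices.
  \item Replacing $U$ by $U\cap N_{\widehat{G}}(S)$ does not push the cap on $|U|$ below $s_k$, since your bound $6s_j\Delta_j^2=6n^{2\kappa}\log n\,\Delta_j$ on $|N_{\widehat{G}}(S)|$ always exceeds $s_k$.
\end{itemize}
The one-sided bound is in fact intrinsically too weak. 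Suppose only class-$j$ endpoints sampled. Take a single class-$j$ cluster $S$ with $M\ge n^{\kappa}(\ln n+1)+s_k$ edges to $M$ distinct class-$k$ clusters, where $q_jM\ll s_k$ (possible when $\Delta_j/\Delta_k$ is large). Then the statement would fail with high probability: just let $U$ be the class-$k$ endpoints of the sampled edges.

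The missing idea is the remark you made and then set aside. Every edge of $\partial_{\widehat{G}}(S,\Ccal^{(k)}\setminus U)$ also has an endpoint in a class-$k$ cluster, which flips its own independent coin with probability $q_k$. So the edge lies in $J$ with probability at least $\max\{q_j,q_k\}=q_{\min\{j,k\}}$. This has two consequences:
\begin{itemize}
  \item The per-pair failure probability is at most $\exp\bigl(-3n^{2\kappa}\log n\,(\ln n+1)/\Delta_{\min\{j,k\}}\bigr)$, and it is $0$ if $q_{\min\{j,k\}}=1$.
  \item Both $s_j$ and $s_k$ are at most $n^{2\kappa}\log n/\Delta_{\min\{j,k\}}$. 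Using $\ln(n+1)\le 1+\ln n$, your own count then gives at most $\exp\bigl(2n^{2\kappa}\log n\,(1+\ln n)/\Delta_{\min\{j,k\}}\bigr)$ pairs $(S,U)$, with no restriction on $U$ needed.
\end{itemize}
The union bound for fixed $(j,k)$ leaves $\exp\bigl(-n^{2\kappa}\log n\,(1+\ln n)/\Delta_{\min\{j,k\}}\bigr)\le 1/(en)$. The last step uses $\Delta_{\min\{j,k\}}\le\Delta_j\le n^{2\kappa}\log n$, which holds because $S\neq\emptyset$ is $j$-moderate; you should state this explicitly. This is exactly the paper's proof, and the final union bound over $(j,k)$ gives $\tilde{O}(1/n)$ as you planned.
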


\begin{proof}
  Fix a tuple $(j, k, S, U)$ for which the premise of the lemma holds, i.e.,
  $|\partial_{\widehat{G}}(S, \Ccal^{(k)} \setminus U)| \geq n^{\kappa} (\ln n + 1)$. Then
  $S \neq \emptyset$ must hold, and since $S$ is $j$-moderate, $|S| \leq n^{2\kappa} \log n / \Delta_j$ yields $\Delta_{\min\{j, k\}} \leq \Delta_j \leq n^{2\kappa} \log n$.
  Since each of these (at least $n^{\kappa}(\ln n + 1)$) edges is sampled with probability at least
  $q_{\min\{j, k\}} = \frac{3 n^{\kappa} \log n}{\Delta_{\min\{j, k\}}}$, we have
  \begin{align*}
    \Pr[\partial_{\widehat{G}}(S, \Ccal^{(k)} \setminus U) \cap J = \emptyset]
     & \leq \left(1 - \frac{3 n^{\kappa} \log n}{\Delta_{\min\{j, k\}}}\right)^{n^{\kappa} (\ln n + 1)} \\
     & \leq \exp\left(- \frac{3 n^{2\kappa} \log n\, (\ln n + 1)}{\Delta_{\min\{j,k\}}} \right),
  \end{align*}
  using $1 - x \leq e^{-x}$.
  (This bound holds trivially even if $q_{\min\{j,k\}} = 1$, since then the left-hand probability is zero.)

  Next, we bound the number of choices of the $j$-moderate set $S$ and the $k$-moderate set $U$. Using
  $\sum_{x \leq K} \binom{N}{x} \leq (eN)^K$ and $\ln(en) = 1 + \ln n$,
  \begin{align*}
    \textrm{\# $j$-moderate sets $S$}
     & \leq \sum_{0 \leq x \leq \lfloor \frac{n^{2\kappa} \log n}{\Delta_j} \rfloor}
    \binom{|\Ccal^{(j)}|}{x}                                                         \\
     & \leq (en)^{\lfloor \frac{n^{2\kappa} \log n}{\Delta_j} \rfloor}               \\
     & \leq \exp\left( \frac{n^{2\kappa} \log n\, (1 + \ln n)}{\Delta_j} \right).
  \end{align*}
  This also applies to the case $n^{2\kappa}\log n / \Delta_j < 1$ (where the only possible choice of $S$
  is the empty set), and the same bound holds for the choice of $U$.
  Hence the number of choices of $(S, U)$ is at most
  $\exp\left(\frac{2 n^{2\kappa} \log n\, (1 + \ln n)}{\Delta_{\min\{j, k\}}}\right)$.

  By the union bound over all choices of $S$ and $U$ (for the fixed pair $(j, k)$),
  the probability that the conclusion fails for some $(S, U)$ is at most
  \begin{align*}
    \exp\left(\frac{2 n^{2\kappa} \log n\, (1 + \ln n)}{\Delta_{\min\{j, k\}}}\right)
    \cdot
    \exp\left(-\frac{3 n^{2\kappa} \log n\, (\ln n + 1)}{\Delta_{\min\{j, k\}}}\right)
    =
    \exp\left(-\frac{n^{2\kappa} \log n\, (1 + \ln n)}{\Delta_{\min\{j, k\}}}\right)
    \leq
    \frac{1}{en},
  \end{align*}
  where the last inequality uses $\Delta_{\min\{j, k\}} \leq n^{2\kappa} \log n$ shown at the beginning of this proof, which makes the exponent
  at least $1 + \ln n$.
  Finally, a union bound over all $(i_{\max} + 1)^2 = O((\log n)^2)$ index pairs
  $(j, k)$ gives total failure
  probability $O((\log n)^2) / (en) = \tilde{O}(1/n)$.
\end{proof}

The lemma below is the key technical observation of our algorithm.

\begin{lemma} \label{lma:closepart}
  Suppose that the statement of Lemma~\ref{lma:outgoingEdge} holds.
  Then $\partial_{\widehat{G}}(P) \subseteq F$ holds for every admissible part $P \in \Pcal$ with high probability.
\end{lemma}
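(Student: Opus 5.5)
We fix an admissible part $P \in \Pcal$ and split according to the two branches of step~4 of \textsf{Partition}. In the first branch ($\Deg_{\widehat{G}}(P) \leq |V(G[P])| n^{\kappa}(\log n+1)^3$) the algorithm adds all of $I_{\widehat{G}}(P) \supseteq \partial_{\widehat{G}}(P)$ to $F$, so there is nothing to prove. In the second branch it adds $I_{\widehat{G}}(P_B)$ to $F$, so every edge of $\partial_{\widehat{G}}(P)$ incident to $P_B$ is already in $F$. The remaining edges of $\partial_{\widehat{G}-F}(P)$ are therefore exactly the edges of $\partial_{\widehat{G}}(P_M, V(\widehat{G}) \setminus P)$ that are not yet in $F$. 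By Theorem~\ref{thm:outdetect}, \textsf{OutDetect}$(n^{\kappa}(\log n+1)^3)$ applied to $P_M$ recovers all of them, with probability $1-O(1/n^3)$, provided their number is at most $n^{\kappa}(\log n+1)^3$. A union bound over the at most $n$ parts then gives the high-probability claim.

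The core step is to bound $|\partial_{\widehat{G}}(P_M, V(\widehat{G})\setminus P)|$ restricted to edges outside $J$. First, every such edge is non-sampled. Indeed, a supernode in $P_M$ is not a boundary vertex of the LDD computed on $\widetilde{G}$, so it has no $J$-edge leaving $P$. Hence $\partial_{\widehat{G}}(P_M, V(\widehat{G})\setminus P) \cap J = \emptyset$.

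Next we decompose by class pairs. We have
\[
\partial_{\widehat{G}}(P_M, V(\widehat{G})\setminus P) = \bigcup_{j,k} \partial_{\widehat{G}}\bigl(P_M^{(j)}, \Ccal^{(k)} \setminus P^{(k)}\bigr).
\]
For each pair $(j,k)$ we set $S = P_M^{(j)}$ and $U = P^{(k)}$. By (C2), $P^{(j)}$ is $j$-moderate, and hence so is its subset $S$; likewise $U$ is $k$-moderate. Since none of these edges lie in $J$, the contrapositive of Lemma~\ref{lma:outgoingEdge} shows that each set $\partial_{\widehat{G}}(S, \Ccal^{(k)} \setminus U)$ has fewer than $n^{\kappa}(\ln n+1)$ edges. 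Summing over the $(i_{\max}+1)^2 \leq \lceil\log_2 n\rceil^2$ pairs bounds the total by $n^{\kappa}(\ln n+1)\lceil \log_2 n\rceil^2 \leq n^{\kappa}(\log n+1)^3$, after routine constant bookkeeping with $\ln n \leq \log n$. This is within the \textsf{OutDetect} threshold.

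The main obstacle I expect is bookkeeping rather than probability. First, multi-edges and self-loops in $\widehat{G}$ must be handled consistently: edges of $G$ between two original vertices in the same supernode become self-loops and never appear in $\partial$, so they are harmless. Second, \textsf{OutDetect} is run on $\widehat{G}-F$ with respect to $P_M$, so its cut $\partial_{\widehat{G}-F}(P_M)$ also contains edges from $P_M$ to $P_B$. Those edges lie in $I_{\widehat{G}}(P_B) \subseteq F$ and are therefore removed, so the cut $\textsf{OutDetect}$ sees is exactly the set bounded above. Third, we must justify that the supernodes of $P_M$ are connected through a tree of bounded height, so that Theorem~\ref{thm:outdetect} applies. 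If $P_M$ is not connected in $F$, I would run the aggregation over the spanning tree of all of $P$ (of height $O(n^{\lambda}\log n)$, from the LDD and cluster trees), with only $P_M$'s vertices contributing labels. This gives the same XOR and the same guarantee.
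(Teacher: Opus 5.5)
Your proposal is correct and follows essentially the same route as the paper: the first branch is immediate, and in the second branch you show that $\partial_{\widehat{G}}(P_M, V(\widehat{G}) \setminus P)$ avoids $J$, split it over class pairs with $S = P_M^{(j)}$ and $U = P^{(k)}$, apply the contrapositive of Lemma~\ref{lma:outgoingEdge}, and sum over at most $(\log n + 1)^2$ pairs to stay within the \textsf{OutDetect} threshold. Your extra bookkeeping fills in details the paper leaves implicit: edges from $P_M$ to $P_B$ already lie in $F$, aggregation can run over a spanning tree of all of $P$ when $P_M$ is disconnected, and a union bound over \textsf{OutDetect} failures handles the per-part error probability.
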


\begin{proof}
  If $\Deg_{\widehat{G}}(P) \leq |V(G[P])| \, n^{\kappa} (\log n + 1)^3$,
  then Step~4 adds all the edges in $I_{\widehat{G}}(P)$ to $F$, and thus the claim is immediate.
  Hence we focus on the case $\Deg_{\widehat{G}}(P) > |V(G[P])| \, n^{\kappa} (\log n + 1)^3$.
  To prove the lemma, it suffices to show
  $|\partial_{\widehat{G}}(P_M, V(\widehat{G}) \setminus P)| \leq n^{\kappa} (\log n + 1)^3$, which implies that $\textsf{OutDetect}(n^{\kappa} (\log n + 1)^3)$ identifies all the
  edges in $\partial_{\widehat{G} - F}(P)$ with high probability, and they are added to $F$.
  Fix any $j, k \in [0, i_{\max}]$. Since $P$ is admissible,
  $|P_M^{(j)}| \leq |P^{(j)}| \leq n^{2\kappa} \log n / \Delta_j$ and
  $|P^{(k)}| \leq n^{2\kappa} \log n / \Delta_k$, i.e., $P_M^{(j)}$ is $j$-moderate and $P^{(k)}$ is $k$-moderate,
  as required by Lemma~\ref{lma:outgoingEdge}. Moreover, since $P_M^{(j)}$ contains no boundary vertex, we
  have $\partial_{\widehat{G}}(P_M^{(j)}, \Ccal^{(k)} \setminus P^{(k)}) \cap J = \emptyset$.
  By Lemma~\ref{lma:outgoingEdge}, $|\partial_{\widehat{G}}(P_M^{(j)}, \Ccal^{(k)} \setminus P^{(k)})| < n^{\kappa} (\ln n + 1)$ holds.
  Summing over all $(i_{\max} + 1)^2 \leq (\log n + 1)^2$ pairs $(j, k)$,
  \begin{align*}
    |\partial_{\widehat{G}}(P_M, V(\widehat{G}) \setminus P)|
     & \leq \sum_{j, k} |\partial_{\widehat{G}}(P_M^{(j)}, \Ccal^{(k)} \setminus P^{(k)})| \\
     & \leq (\log n + 1)^2 \cdot n^{\kappa} (\ln n + 1)
    \leq n^{\kappa} (\log n + 1)^3.
  \end{align*}
  where the last step uses $\ln n + 1 \leq \log n + 1$. This proves the lemma.
\end{proof}

Let $\Pcal_1$ (resp.\ $\Pcal_2$) denote the set of parts of $\Pcal$ that violate condition (C1) (resp.\ (C2)).
Lemma~\ref{lma:closepart} ensures that every admissible part is closed with respect to $F$ at Step~4, and thus
every part left open is non-admissible and hence lies in $\Pcal_1 \cup \Pcal_2$. Moreover, since $J \subseteq F$,
we have $D_{\widehat{G}_F[P]} \leq D_{\widetilde{G}[P]}$ for any part $P$. Thus bounding
$\sum_{P \in \Pcal_1 \cup \Pcal_2} (D_{\widetilde{G}[P]} + 1)$ upper-bounds $\sum_{P \in \Pcal^{\mathrm{open}}_F} (D_{\widehat{G}_F[P]} + 1)$,
the quantity constrained by condition~(\ref{enum:open-part}) of Definition~\ref{dfn:abc-partition}.

\begin{lemma} \label{lma:opendiameter}
  Suppose that the statement of Lemma~\ref{lma:highsize} holds. Then
  \begin{align*}
    \sum_{P \in \Pcal_1 \cup \Pcal_2} \bigl(D_{\widetilde{G}[P]} + 1\bigr)
    = \tilde{O}(n^{1 - \kappa - \lambda} + n^{1 - 2\kappa + \lambda}).
  \end{align*}
\end{lemma}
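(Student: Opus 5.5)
Every part has diameter $O(n^{\lambda}\log n)$ in $\widetilde{G}$ (Theorem~\ref{thm:diameter-d-decomposition}(1), which we take to hold with the stated probability). So $D_{\widetilde{G}[P]}+1 = \tilde{O}(n^{\lambda})$ for every $P$, and it suffices to bound the counts $|\Pcal_1| = \tilde{O}(n^{1-\kappa-2\lambda})$ and $|\Pcal_2| = \tilde{O}(n^{1-2\kappa})$. Multiplying each count by $\tilde{O}(n^{\lambda})$ then gives the two claimed terms.

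\textbf{Bounding $|\Pcal_1|$.} Each $P\in\Pcal_1$ contains at least one boundary supernode in $\Ccal^H$. Since the parts are disjoint, we can charge each such part to a distinct element of $B\cap\Ccal^H$, so
\[
|\Pcal_1|\le\sum_{i>t}|B^{(i)}|.
\]
Lemma~\ref{lma:highsize} gives
\[
\sum_{i>t}|B^{(i)}|\le 52(\log n+1)^3 n^{1-\lambda}\sum_{i>t}\Delta_i^{-1}.
\]
The sum $\sum_{i>t}\Delta_i^{-1}$ is geometric and dominated by its first term, and $\Delta_{t+1} > n^{\kappa+\lambda}\log n$ by the choice of $t$. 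Hence the sum is at most $2/(n^{\kappa+\lambda}\log n)$, which yields $|\Pcal_1| = \tilde{O}(n^{1-\kappa-2\lambda})$.

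\textbf{Bounding $|\Pcal_2|$.} A part $P\in\Pcal_2$ has some class $i$ with $|P^{(i)}| > n^{2\kappa}\log n/\Delta_i$. Fix $i$ and let $\Pcal_2^{(i)}$ be the parts violating (C2) at class $i$. These parts are disjoint, and each contains more than $n^{2\kappa}\log n/\Delta_i$ elements of $\Ccal^{(i)}$. By Lemma~\ref{lma:balancedcluster}(4), $|\Ccal^{(i)}|\le 13n\ln n/\Delta_i$. Therefore
\[
|\Pcal_2^{(i)}| < \frac{13n\ln n/\Delta_i}{n^{2\kappa}\log n/\Delta_i} = O(n^{1-2\kappa}).
\]
The factor $\Delta_i$ cancels, which is the point of the moderate-size threshold. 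A union over the $O(\log n)$ classes gives $|\Pcal_2| = \tilde{O}(n^{1-2\kappa})$.

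\textbf{Combining.} We have
\[
\sum_{P\in\Pcal_1\cup\Pcal_2}\bigl(D_{\widetilde{G}[P]}+1\bigr)\le(|\Pcal_1|+|\Pcal_2|)\cdot O(n^{\lambda}\log n).
\]
By the two bounds above this is $\tilde{O}(n^{1-\kappa-\lambda}+n^{1-2\kappa+\lambda})$.

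The main care point is the per-part diameter bound. It holds only with probability $1-O(1/n)$ from MPX, so that event should be folded into the hypothesis or stated alongside it. Also, $\widetilde{G}$ is a contracted multigraph, but MPX's diameter guarantee still applies to it directly. The counting steps are routine once the charging to distinct boundary supernodes is made explicit.
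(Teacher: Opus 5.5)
Your proposal is correct and follows the paper's proof essentially step by step. It uses the uniform $\tilde{O}(n^{\lambda})$ per-part diameter from MPX, the injective charging $|\Pcal_1| \le |B \cap \Ccal^H|$ combined with Lemma~\ref{lma:highsize} and the geometric sum over $i > t$, and the per-class count $|\Ccal^{(i)}| / (n^{2\kappa}\log n/\Delta_i) = O(n^{1-2\kappa})$ for $\Pcal_2$. Your remark that the MPX diameter bound holds only with probability $1 - O(1/n)$, and so must be assumed alongside the hypothesis, is a fair point that the paper leaves implicit.
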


\begin{proof}
  By Theorem~\ref{thm:diameter-d-decomposition}, $D_{\widetilde{G}[P]} = \tilde{O}(n^{\lambda})$ for every part $P$, and hence $D_{\widetilde{G}[P]} + 1 = \tilde{O}(n^{\lambda})$ as well, since $n^{\lambda} \geq 1$.
  We bound $\sum_{P \in \Pcal_1} (D_{\widetilde{G}[P]} + 1)$ and $\sum_{P \in \Pcal_2} (D_{\widetilde{G}[P]} + 1)$ separately.

  \begin{itemize}
    \item \textbf{Bounding $\sum_{P \in \Pcal_1} (D_{\widetilde{G}[P]} + 1)$:}
          Each $P \in \Pcal_1$ contains at least one vertex of $B \cap \Ccal^H$. This immediately implies $|\Pcal_1| \leq |B \cap \Ccal^H|$.
          By Lemma~\ref{lma:highsize},
          $|B^{(i)}| \leq 52(\log n+1)^3 n^{1-\lambda}/\Delta_i$ holds for all $i$ simultaneously.
          For every $i > t$, we have $\Delta_i \geq \Delta_{t+1} > n^{\kappa+\lambda}\log n$,
          hence $\Delta_{t} > n^{\kappa+\lambda}\log n / 2$.
          Since the geometric series gives $\sum_{i > t} \Delta_i^{-1} \leq \Delta_t^{-1}$, this implies
          \begin{align*}
            |B \cap \Ccal^H|
            \leq \frac{52(\log n+1)^3 \, n^{1-\lambda}}{\Delta_{t}}
            < \frac{104(\log n+1)^3 \, n^{1-\kappa-2\lambda}}{\log n},
          \end{align*}
          and thus
          \begin{align*}
            \sum_{P \in \Pcal_1} (D_{\widetilde{G}[P]} + 1)
            \leq |B \cap \Ccal^H| \cdot \tilde{O}(n^{\lambda})
            \leq \frac{104(\log n+1)^3 \, n^{1-\kappa-2\lambda}}{\log n} \cdot \tilde{O}(n^{\lambda})
            = \tilde{O}(n^{1-\kappa-\lambda}).
          \end{align*}

    \item \textbf{Bounding $\sum_{P \in \Pcal_2} (D_{\widetilde{G}[P]} + 1)$:}
          By Lemma~\ref{lma:balancedcluster}(4), $|\Ccal^{(i)}| \leq 13n\ln n/\Delta_i$.
          For each fixed $i$, the number of parts $P \in \Pcal$ satisfying
          $|P^{(i)}| \geq n^{2\kappa}\log n/\Delta_i$ is at most
          \begin{align*}
            \frac{13n\ln n/\Delta_i}{n^{2\kappa}\log n/\Delta_i}
            = \frac{13n\ln n}{n^{2\kappa}\log n}
            \leq 13n^{1-2\kappa}.
          \end{align*}
          Summing over $i_{\max}+1 \leq \lceil\log_2 n\rceil$ values of $i$:
          \begin{align*}
            \sum_{P \in \Pcal_2} (D_{\widetilde{G}[P]} + 1)
            \leq 13\lceil\log_2 n\rceil \cdot n^{1-2\kappa} \cdot \tilde{O}(n^{\lambda})
            = \tilde{O}(n^{1-2\kappa+\lambda}).
          \end{align*}
  \end{itemize}
  Combining both bounds yields the lemma.
\end{proof}

Finally, we bound the time and message complexity of Algorithm \textsf{Partition}.

\begin{lemma} \label{lma:complexityPartition}
  Let $0 \leq \lambda \leq \kappa \leq 1/2$. With probability $1 - o(1)$, Algorithm \textsf{Partition} runs
  in $\tilde{O}(\alpha + \beta + n^{\kappa})$ rounds and consumes
  $\tilde{O}(\min\{m, n^{1 + \kappa}\})$ messages.
\end{lemma}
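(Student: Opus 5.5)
We bound rounds and messages step by step over the four steps of \textsf{Partition}, conditioning on the high-probability events of Lemmas~\ref{lma:balancedcluster}, \ref{lma:highsize} and \ref{lma:outgoingEdge} and on the success of \textsf{MPX} and \textsf{OutDetect}; a union bound over these failure probabilities gives $1-o(1)$, with the $O(1/\log n)$ term of Lemma~\ref{lma:highsize} dominating. The $\min\{m,\cdot\}$ part is immediate: every step sends messages only along edges of $G$, each edge carrying $\tilde{O}(1)$ messages per step, except \textsf{OutDetect}. If $m \le n^{1+\kappa}$ we can instead let every vertex add all incident edges to $F$. Then all parts become closed, and the cost is trivially $O(m)$. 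So it suffices to prove the $\tilde{O}(n^{1+\kappa})$ bound.

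\textbf{Steps 1--3.} Step~1 costs $O(\log n)$ rounds and $\tilde{O}(n)$ messages by Lemma~\ref{lma:balancedcluster}. Step~2 is local sampling and uses no communication beyond notifying the sampled neighbors. For a vertex in a class-$i$ cluster, Lemma~\ref{lma:balancedcluster}(2) gives degree at most $2\Delta_i$, so its expected number of sampled edges is at most $2\Delta_i q_i \le 6n^{\kappa}\log n$. A Chernoff bound then gives $|J| = \tilde{O}(n^{1+\kappa})$ w.h.p. Step~3 runs $\textsf{MPX}(n^{\lambda})$ on $\widetilde{G}$, simulated on $G$ via intra-cluster communication along the $O(\log n)$-diameter cluster trees. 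By Theorem~\ref{thm:diameter-d-decomposition} this takes $\tilde{O}(n^{\lambda}) = \tilde{O}(\alpha)$ rounds and $O(|J|) + \tilde{O}(n)$ messages; the $\tilde{O}(n)$ term accounts for broadcasts inside clusters.

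\textbf{Step 4, messages.} Aggregating $\Deg_{\widehat{G}}(P)$ and $|V(G[P])|$ over the spanning tree of each part costs $\tilde{O}(n^{\lambda})$ rounds and $O(n)$ messages in total. In the first branch, adding $I_{\widehat{G}}(P)$ costs $O(\Deg_{\widehat{G}}(P)) \le |V(G[P])|\,n^{\kappa}(\log n+1)^3$ messages, which sums to $\tilde{O}(n^{1+\kappa})$ because the parts are disjoint. In the second branch, \textsf{OutDetect} costs $\tilde{O}(|V(G[P])|\,n^{\kappa})$ by Theorem~\ref{thm:outdetect}, which again sums to $\tilde{O}(n^{1+\kappa})$. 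The remaining cost is adding $I_{\widehat{G}}(P_B)$ for admissible parts, and this is the main obstacle. By (C1), every boundary cluster involved lies in some class $i \le t$. By Lemma~\ref{lma:balancedcluster}(2),(5), such a cluster is incident to at most $3\Delta_i \cdot 2\Delta_i = 6\Delta_i^2$ edges. Lemma~\ref{lma:highsize} bounds $|B^{(i)}| = \tilde{O}(n^{1-\lambda}/\Delta_i)$. The total is therefore
\[
\sum_{i\le t} \tilde{O}\bigl(n^{1-\lambda}\Delta_i\bigr) = \tilde{O}\bigl(n^{1-\lambda}\Delta_t\bigr) = \tilde{O}(n^{1+\kappa}),
\]
using the geometric sum and $\Delta_t \le n^{\kappa+\lambda}\log n$.

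\textbf{Step 4, rounds.} Each admissible part has a spanning tree of height $\tilde{O}(n^{\lambda})$, counting the $O(\log n)$ cluster-internal hops. So the aggregations and \textsf{OutDetect} take $\tilde{O}(n^{\lambda}+n^{\kappa})$ rounds by Theorem~\ref{thm:outdetect}, and all parts run in parallel on disjoint vertex sets. Adding incident edges to $F$ takes one round of direct communication. Summing the round bounds of all steps gives $\tilde{O}(\alpha+n^{\kappa}) \subseteq \tilde{O}(\alpha+\beta+n^{\kappa})$. One point needs care: edges of $\widehat{G}$ are real edges of $G$, so messages on them are counted once per endpoint and no multigraph blow-up occurs.
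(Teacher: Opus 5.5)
\textbf{Gap: the $\min\{m,\cdot\}$ part of the message bound.} Your accounting of the $\tilde{O}(n^{1+\kappa})$ bound follows the paper's proof step by step: Steps 1--3, the first branch summed over disjoint parts, and the boundary edges via (C1), Lemma~\ref{lma:highsize} and $O(\Delta_i^2)$ incident edges per cluster. So does the \textsf{OutDetect} cost summed over parts, and the round bound. The problem is your reduction ``if $m \le n^{1+\kappa}$, let every vertex add all incident edges to $F$.''

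That reduction fails for two reasons:
\begin{itemize}
  \item It replaces \textsf{Partition} by a different algorithm, whereas the lemma is a statement about \textsf{Partition} as specified.
  \item The switch cannot be implemented. In \textsf{CONGEST-KT$_1$} vertices know $n$ but not $m$. Deciding whether $m \le n^{1+\kappa}$ needs a global aggregation over a spanning structure, which is exactly what the construction is trying to obtain cheaply.
\end{itemize}
So you have not shown that \textsf{Partition} itself uses $\tilde{O}(m)$ messages when $m$ is small. You did correctly identify \textsf{OutDetect} as the one step where this is not immediate.

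\textbf{The missing observation.} The degree test in Step~4 already handles this. \textsf{OutDetect} is invoked only for parts $P$ with $\Deg_{\widehat{G}}(P) > |V(G[P])|\,n^{\kappa}(\log n+1)^3 = |V(G[P])|\,f$. Hence its cost $\tilde{O}(|V(G[P])|\,f)$ is $\tilde{O}(\Deg_{\widehat{G}}(P))$, and summing over the disjoint parts gives $\tilde{O}(m)$. The remaining costs are also $\tilde{O}(m)$:
\begin{itemize}
  \item $|J| \le m$, and \textsf{MPX} costs $O(|J|)$.
  \item The first branch and the addition of $I_{\widehat{G}}(P_B)$ cost at most the corresponding degrees.
  \item Step~1 and the aggregations cost $\tilde{O}(n) = \tilde{O}(m)$, since $G$ is connected.
\end{itemize}
Together these give $\tilde{O}(\min\{m, n^{1+\kappa}\})$ for \textsf{Partition} without modifying it. This same observation is what underlies the last equality $\tilde{O}(|\Ccal| \cdot n^{\kappa}) = \tilde{O}(\min\{m, n^{1+\kappa}\})$ in the paper's final display.
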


\begin{proof}
  The running time is dominated by Steps (3) and (4).
  Step (3) takes $\tilde{O}(n^{\lambda})$ rounds by Theorem~\ref{thm:diameter-d-decomposition}.
  Step (4) requires $\tilde{O}(n^{\lambda})$ rounds for $O(\log n)$ aggregation iterations over
  the spanning tree of each admissible part $P$, and $\tilde{O}(n^{\kappa})$ rounds
  for \textsf{OutDetect} by Theorem~\ref{thm:outdetect}.
  Since $\lambda \leq \kappa$, the total running time is $\tilde{O}(n^{\kappa})$.

  The message complexity of each step is analyzed as follows.
  \begin{itemize}
    \item (Step 1) By Lemma~\ref{lma:balancedcluster}, $\tilde{O}(n)$ messages are consumed.
    \item (Step 2) By a standard Chernoff bound argument,
          $|J| = \tilde{O}(\min\{m, n^{1+\kappa}\})$ with high probability.
    \item (Step 3) By Theorem~\ref{thm:diameter-d-decomposition}, the message cost is
          $O(|J|) = \tilde{O}(\min\{m, n^{1+\kappa}\})$.
    \item (Step 4) Let $\Acal_1$ denote the set of admissible parts $P$
          with $\Deg_{\widehat{G}}(P) \leq |V(G[P])| n^{\kappa} (\log n + 1)^3$,
          and let $\Acal_2$ be the remaining admissible parts.
          Determining the admissibility of $P$ requires $O(\log n)$ iterations of aggregation
          along the spanning tree of $\widetilde{G}[P]$. It costs $\tilde{O}(|V(G[P])|)$ messages per part
          and $\tilde{O}(n)$ in total. This is not a dominant cost.

          For each $P \in \Acal_1$, all edges in $I_{\widehat{G}}(P)$ are added to $F$,
          at a cost of $\Deg_{\widehat{G}}(P) \leq \tilde{O}(|V(G[P])| n^{\kappa})$ messages per part.
          Since $\sum_{P} |V(G[P])| \leq n$, summing over $\Acal_1$ yields $\tilde{O}(n^{1+\kappa})$.

          For $P \in \Acal_2$, we bound the cost of adding the edges incident to $P_B$.
          By Lemma~\ref{lma:highsize},
          $|B^{(i)}| \leq 52(\log n+1)^3 n^{1-\lambda}/\Delta_i$
          with probability $1 - o(1)$ for every $i \in [0, i_{\max}]$.
          By Lemma~\ref{lma:balancedcluster}(2) and (5),
          each cluster $C \in \Ccal^{(i)}$ has $O(\Delta_i)$ vertices each of degree at most $2\Delta_i$,
          and thus $C$ is incident to $O(\Delta_i^2)$ edges in $\widehat{G}$.
          Hence the total number of edges incident to all clusters in $B \cap \Ccal^{(i)}$ is
          $O(|B^{(i)}| \cdot \Delta_i^2) = \tilde{O}(n^{1-\lambda} \Delta_i)$.
          By condition~(C1), an admissible part has no boundary supernode in $\Ccal^H$, i.e., of class $i > t$.
          Hence only levels $i \leq t$ (with $\Delta_i \leq n^{\kappa+\lambda} \log n$) contribute,
          and summing over them,
          \begin{align*}
            \sum_{P \in \Acal_2} \Deg_G(V_G(P_B))
             & = \min\!\left\{2m,\; \sum_{i \leq t} \tilde{O}(n^{1-\lambda} \Delta_i)\right\}                     \\
             & = \min\!\left\{2m,\; \tilde{O}(n^{1-\lambda} \cdot n^{\kappa+\lambda} \log n \cdot \log n)\right\} \\
             & = \tilde{O}(\min\{m, n^{1+\kappa}\}).
          \end{align*}
          For \textsf{OutDetect}, Theorem~\ref{thm:outdetect} gives a cost of
          $\tilde{O}(|V(G[P_M])| \cdot f) = \tilde{O}(|V(G[P])| \cdot n^{\kappa})$ messages per part,
          where $f = n^{\kappa}(\log n+1)^3$.
          Summing over all $P \in \Acal_2$ yields
          $\tilde{O}(|V(G[\Ccal])| \cdot n^{\kappa}) = \tilde{O}(n^{1+\kappa})$.

          Hence, the total message cost of Step 4 is
          \begin{align*}
             & \sum_{P \in \Acal_1} \Deg_{\widehat{G}}(P)
            + \sum_{P \in \Acal_2} \Deg_G(V_G(P_B))
            + \sum_{P \in \Acal_2} \tilde{O}(|P| \cdot n^{\kappa}) \\
             & = \tilde{O}(\min\{m, n^{1+\kappa}\})
            + \tilde{O}(\min\{m, n^{1+\kappa}\})
            + \tilde{O}(|\Ccal| \cdot n^{\kappa})                  \\
             & = \tilde{O}(\min\{m, n^{1+\kappa}\}).
          \end{align*}
  \end{itemize}
  Combining all the steps, we obtain a total message complexity of $\tilde{O}(\min\{m, n^{1+\kappa}\})$.
  The lemma is proved.
\end{proof}

By the lemmas above, we obtain the correctness of \textsf{Partition}.

\begin{theorem} \label{thm:correctnessPartition}
  Let $0 \leq \lambda \leq \kappa \leq 1/2$.
  With probability $1 - o(1)$, the algorithm \textsf{Partition} outputs an $(\alpha, \beta)$-partition with
  \begin{align*}
    \alpha = \tilde{O}(n^{\lambda}), \hspace{5mm} \beta = \tilde{O}(n^{1 - \kappa - \lambda} + n^{1 - 2\kappa + \lambda}).
  \end{align*}
  The running time is $\tilde{O}(n^{\kappa})$ rounds, and the message complexity is $\tilde{O}(\min\{m, n^{1 + \kappa}\})$.
\end{theorem}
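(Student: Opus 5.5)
The theorem follows by assembling the preceding lemmas; nothing new needs to be estimated. We verify the two conditions of Definition~\ref{dfn:abc-partition} for the pair $(\Pcal, F)$ output by \textsf{Partition} on $\widehat{G}$, and then transfer the guarantee to $G$.

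\textbf{Good events.} We first fix the events under which we argue:
\begin{itemize}
  \item Lemma~\ref{lma:balancedcluster} holds, with probability $1-O(\log n/n^2)$.
  \item Condition~1 of Theorem~\ref{thm:diameter-d-decomposition} holds, i.e., every part has $D_{\widetilde{G}[P]} = O(n^{\lambda}\log n)$, with probability $1-O(1/n)$.
  \item Lemma~\ref{lma:highsize} holds, with probability $1-O(1/\log n)$.
  \item Lemma~\ref{lma:outgoingEdge} holds, with probability $1-\tilde{O}(1/n)$.
  \item Every invocation of \textsf{OutDetect} succeeds. Each fails with probability $O(1/n^3)$ and there are at most $n$ invocations, so a union bound costs $O(1/n^2)$.
\end{itemize}
Lemma~\ref{lma:complexityPartition} also holds with probability $1-o(1)$. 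A final union bound shows that all these events hold simultaneously with probability $1-o(1)$.

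\textbf{Condition 1 ($\alpha$).} Since $J \subseteq F$, the graph $\widetilde{G}[P]$ is a subgraph of $\widehat{G}_F[P]$. Therefore $D_{\widehat{G}_F[P]} \le D_{\widetilde{G}[P]} = \tilde{O}(n^{\lambda})$ for every part $P$.

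\textbf{Condition 2 ($\beta$).} By Lemma~\ref{lma:closepart}, every admissible part is closed with respect to $F$. Hence $\Pcal^{\mathrm{open}}_F \subseteq \Pcal_1\cup\Pcal_2$. Lemma~\ref{lma:opendiameter}, combined with the same monotonicity of diameters, gives
\begin{align*}
\sum_{P\in\Pcal^{\mathrm{open}}_F}\bigl(D_{\widehat{G}_F[P]}+1\bigr) \le \sum_{P\in\Pcal_1\cup\Pcal_2}\bigl(D_{\widetilde{G}[P]}+1\bigr) = \tilde{O}\bigl(n^{1-\kappa-\lambda}+n^{1-2\kappa+\lambda}\bigr).
\end{align*}

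\textbf{Complexity.} The round and message bounds are exactly Lemma~\ref{lma:complexityPartition}. Its proof already reduces the running time to $\tilde{O}(n^{\kappa})$ using $\lambda\le\kappa$. The Step~1 cost is $\tilde{O}(n)$ by Lemma~\ref{lma:balancedcluster}.

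\textbf{Transfer to $G$ (main obstacle).} This is the only step needing care, and I expect it to be where the most detail is required. The plan has three parts.
\begin{itemize}
  \item Add the intra-cluster edges of $F$ from Lemma~\ref{lma:balancedcluster} to the edge set.
  \item Set the partition of $V(G)$ to $\{\bigcup_{C\in P} C : P\in\Pcal\}$.
  \item Bound the blow-up of diameters.
\end{itemize}
For the diameter bound, a path of length $\ell$ in $\widehat{G}_F[P]$ lifts to a path of length at most $\ell + (\ell+1)(2\lceil\log_2 n\rceil-2)$ in $G$, because each cluster has diameter $O(\log n)$ by property~3. Thus both $\alpha$ and $\beta$ grow by only an $O(\log n)$ factor.

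Closedness is preserved. Every $G$-edge leaving $\bigcup P$ corresponds to a $\widehat{G}$-edge leaving $P$, and all such edges were added to $F$. The virtual vertices created in Section~\ref{subsec:balancing} are simulated by their originals, so they change sizes only by constant factors.

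Combining the three conditions yields the theorem.
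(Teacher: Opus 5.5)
Your proposal is correct and follows the paper's argument. The paper likewise obtains the theorem by combining Lemma~\ref{lma:closepart} (admissible parts are closed), Lemma~\ref{lma:opendiameter} with the monotonicity $D_{\widehat{G}_F[P]} \le D_{\widetilde{G}[P]}$ that follows from $J \subseteq F$, and Lemma~\ref{lma:complexityPartition}; your explicit union bound over the good events and your lifting from $\widehat{G}$ to $G$ (an $O(\log n)$ factor via the cluster-diameter bound) only spell out details the paper treats as immediate.
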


Combining with Lemma~\ref{lma:spanner-from-partition}, we obtain Theorem~\ref{thm:main}. The
resulting round complexity is $\tilde{O}(\alpha + \beta + n^{\kappa})$, where the $\tilde{O}(n^{\kappa})$
term is the running time of \textsf{Partition} and $\tilde{O}(\alpha + \beta)$ is that of
Lemma~\ref{lma:spanner-from-partition}.

\section{Concluding Remarks}
\label{sec:conclude}

This paper demonstrated that the quadratic time--message trade-off barrier
$\text{\#rounds} \cdot \text{\#messages} = \tilde{\Omega}(n^2)$,
which all prior algorithms in the \textsf{CONGEST-KT$_1$} model implicitly face,
is \emph{not} a fundamental limit.
Our key contribution is a new $(\alpha, \beta)$-partition framework, together with a
degree-adaptive clustering technique, that yields a message-efficient construction of
$(\alpha, \beta)$-spanners with $\tilde{O}(n^{1+\kappa})$ messages.
For any $0 \leq \lambda \leq \kappa \leq 1/2$, this gives MST algorithms that run in
$\tilde{O}(n^{\lambda}D_G + n^{1-\kappa-\lambda} + n^{1-2\kappa+\lambda} + n^{1/2})$ rounds
and use $\tilde{O}(\min\{m, n^{1+\kappa}\})$ messages.
Setting $(\kappa, \lambda) = (1/3, 1/6)$ gives a nearly time-optimal
$\tilde{O}(n^{1/2} + n^{1/6}D_G)$-round algorithm with only $\tilde{O}(n^{4/3})$ messages,
and more generally the barrier is broken for almost the entire range of the diameter $D_G$.

Our work leaves several natural questions open.
\begin{itemize}
  \item Can the time--message trade-off be improved further, and is the trade-off achieved here optimal?
  \item Can the multiplicative factor $n^{\lambda}$ ($=\alpha$) that multiplies $D_G$ in the round complexity be removed, making the dependence on $D_G$ additive?
  \item Can our approach be applied to obtain message-efficient algorithms for other global problems? Two concrete targets are BFS-tree construction and, more generally, shortest-path-tree construction on weighted graphs, for which no $o(m)$-message algorithm is currently known in the \textsf{CONGEST-KT$_1$} model.
  \item As a more general question, not necessarily tied to our approach, can $o(m)$ message complexity be attained by a \emph{deterministic} algorithm?
\end{itemize}
The last two questions have been answered affirmatively in the less restrictive \textsf{LOCAL} model. Dufoulon, Pandurangan, Robinson, and Scquizzato~\cite{DPRS24} show that in the \textsf{LOCAL-KT$_1$} model, a broad class of problems, including BFS-tree construction, can be solved deterministically using $\tilde{O}(n)$ messages and $\tilde{O}(D_G)$ rounds. Whether analogous guarantees are achievable in the message-limited \textsf{CONGEST-KT$_1$} setting remains open.

\paragraph{Use of AI-based tools.}
In preparing this paper, the authors used AI-based tools solely to assist with language and presentation, namely for English translation, paraphrasing, suggestions on English wording, and checking for grammatical errors and typos. All technical ideas, results, and proofs are entirely the authors' own and were developed without the use of AI. The final manuscript has been carefully reviewed and verified by the authors, who take full responsibility for its content.

\bibliographystyle{plain}
\bibliography{references}

\appendix

\section{Algorithm \textsf{MPX}}
\label{appendix:MPX}

\subsection{Outline}

Let $H = (V(H), E(H))$ be the input graph, and $d > 0$ be the parameter of the algorithm.
The decomposition is obtained via a nearest-center assignment distorted by randomly shifted distances, where the distance
from a vertex $v$ is ``shifted'' by a random value following the exponential distribution with mean $d$.
This procedure admits an equivalent formulation using a shortest-path computation with
an additional super-source vertex. The algorithm works as follows:
\begin{enumerate}
  \item Add a new vertex $s$ to $H$, and connect $s$ to every vertex $v \in V(H)$
        by an edge.
  \item Independently sample $d_v \sim \mathrm{Exp}(d)$ for each $v \in V(H)$, where
        $\mathrm{Exp}(d)$ is the exponential distribution with mean $d$. Then
        assign weight $d_v$ to the edge $(s,v)$. Assign weight one to every edge in $E(H)$.
  \item Compute a shortest path tree from $s$
        in the weighted graph constructed above. For each child $u$ of $s$, the vertices in the
        subtree rooted at $u$ form the part $S(u)$. The algorithm outputs the partition $\Scal$
        consisting of all parts $S(u)$ for all children $u$.
\end{enumerate}

For each vertex $v \in S(u)$, we refer to $u$ as the \emph{center} of $v$, which is denoted by $c(v)$.
We present the key technical lemmas of \textsc{Mpx}, which are derived from the fundamental
statistical properties of the exponential distribution.

\begin{lemma} \label{lma:mpx-max}
  $\max_{v \in V(H)}d_v \leq (\delta + 1) d \ln n$ with probability at least $1- n^{-\delta}$.
\end{lemma}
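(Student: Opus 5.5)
The plan is to use the explicit tail of the exponential distribution together with a union bound over the $n$ vertices. Here $\mathrm{Exp}(d)$ has mean $d$, so for every $x \geq 0$ and every $v \in V(H)$,
\begin{align*}
  \Pr[d_v > x] = e^{-x/d}.
\end{align*}
This is just the survival function of the exponential distribution with rate $1/d$, and I will take it as given.

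The first step is to fix a single vertex $v$ and substitute the threshold $x = (\delta+1) d \ln n$. This gives
\begin{align*}
  \Pr[d_v > (\delta+1) d \ln n] = e^{-(\delta+1)\ln n} = n^{-(\delta+1)}.
\end{align*}

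The second step is a union bound over all vertices. Since $|V(H)| \leq n$ (in our application $H$ is the contracted graph $\widetilde{G}$, which has at most $n$ supernodes), we get
\begin{align*}
  \Pr\Bigl[\max_{v} d_v > (\delta+1) d\ln n\Bigr] \leq n \cdot n^{-(\delta+1)} = n^{-\delta}.
\end{align*}
Taking the complement yields the claimed bound with probability at least $1 - n^{-\delta}$.

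There is no real obstacle here. Two points need care. First, the convention "mean $d$" must be used, so that the rate is $1/d$ and not $d$. Second, the number of vertices of $H$ must be at most $n$, the quantity inside the logarithm. If $H$ had more vertices, for instance because of the virtual copies from Section~\ref{subsec:balancing}, the count is still $O(n)$. The bound then holds up to a constant factor, or by replacing $\delta$ with $\delta + o(1)$. Independence of the $d_v$ is not needed, since the union bound works without it.
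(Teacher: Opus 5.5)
Your proof is correct: the exponential tail $\Pr[d_v > x] = e^{-x/d}$ at $x = (\delta+1)d\ln n$ gives failure probability $n^{-(\delta+1)}$ per vertex. A union bound over the at most $n$ vertices then gives $n^{-\delta}$, which is the standard argument from~\cite{MPX13} that the paper cites instead of proving the lemma itself.
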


\begin{lemma} \label{lma:boundaryprob}
  For each $u, v \in V(H)$, define $Y^u_v =\Dist_H(u, v) - d_v$, and let $Y^u_{(i)}$ be the $i$-th smallest value
  in $\{Y^u_v\}_{v \in V(H)}$. Then for any $c \geq 0$, $\Pr[Y^u_{(2)}-Y^u_{(1)} \leq c] < c/d$.
\end{lemma}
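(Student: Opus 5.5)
The statement to prove is Lemma~\ref{lma:boundaryprob}: for a fixed $u$, the gap between the two smallest shifted values $Y^u_v = \Dist_H(u,v) - d_v$ exceeds $c$ except with probability less than $c/d$. I would use the memorylessness of the exponential distribution, conditioning on everything except the winning variable.

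\textbf{Step 1: rewrite as a tail condition.} Write $a_v = \Dist_H(u,v)$, a fixed constant, and set $X_v = d_v - a_v$, so that $Y^u_v = -X_v$. The two smallest $Y$-values are then the negatives of the two largest $X$-values. Hence $Y^u_{(2)} - Y^u_{(1)} = X_{(1)} - X_{(2)}$, where $X_{(1)}$ and $X_{(2)}$ are the largest and second-largest of the $X_v$.

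\textbf{Step 2: condition on all but one variable.} Decompose the event according to which vertex $w$ attains the maximum. For a fixed $w$, condition on all $d_v$ with $v \neq w$, and let $M_w = \max_{v \neq w} X_v$, which is now fixed.

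\begin{itemize}
\item The event "$w$ is the maximizer and the gap is at most $c$" is contained in $\{M_w < X_w \leq M_w + c\}$, i.e., $\{d_w \in (T,\, T+c]\}$ with $T = M_w + a_w$.
\item If $T \geq 0$, memorylessness gives $\Pr[d_w \le T + c \mid d_w > T] = 1 - e^{-c/d}$.
\item If $T < 0$, the relevant window is $[0, T+c]$, which has length less than $c$, and the bound only improves.
\end{itemize}

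Therefore
\begin{align*}
\Pr[\text{$w$ wins, gap} \le c \mid (d_v)_{v\neq w}] \;\le\; (1 - e^{-c/d})\,\Pr[X_w > M_w \mid (d_v)_{v\neq w}].
\end{align*}

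\textbf{Step 3: sum over $w$.} The events $\{X_w > M_w\}$ for different $w$ are disjoint, and with probability one their union covers everything, since ties occur with probability $0$. Taking expectations and summing over $w$ gives
\begin{align*}
\Pr[Y^u_{(2)} - Y^u_{(1)} \leq c] \;\leq\; 1 - e^{-c/d} \;\leq\; c/d.
\end{align*}

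The strict inequality claimed in the statement holds for $c > 0$ because $1 - e^{-x} < x$ for $x > 0$. In the degenerate case $c = 0$, the probability is $0$ since ties have probability zero, so the strict "$<$" needs an exception there, or the lemma should be read with "$\le$". The main obstacle is exactly this handling of ties, together with the $T < 0$ case, in Step 2; both are routine.
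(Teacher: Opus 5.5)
Your proof is correct and is essentially the memorylessness argument of \cite{MPX13}, to which the paper defers without giving its own proof. Conditioning on $(d_v)_{v \neq w}$ keeps $d_w$ exponential and independent of the threshold $T$, summing over the almost surely unique maximizer $w$ gives the slightly sharper bound $1 - e^{-c/d}$, and your treatment of the case $T < 0$ is exactly what is needed. Your caveat about $c = 0$ is also right: ties have probability zero, so the probability is then exactly $0$ and the strict inequality holds only for $c > 0$. This is harmless here, since the paper applies the lemma only with $c = 2$.
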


The proofs of both lemmas are given in the original paper~\cite{MPX13}. Given an output partition $\Scal$, let $B$ be
the set of all boundary vertices, i.e., vertices with an incident edge crossing two parts.

\begin{lemma}
  Let $B$ be the set of all boundary vertices. For each $u \in V(H)$, $\Pr[u \in B] < 2/d$.
\end{lemma}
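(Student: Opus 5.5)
The plan is to prove $\Pr[u \in B] < 2/d$ by reducing the event ``$u$ is a boundary vertex'' to a small-gap event for the shifted distances seen from $u$, and then applying Lemma~\ref{lma:boundaryprob} with $c = 2$.

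First, the structure of the parts. In the shortest-path formulation, $u$ is assigned to the center $c(u)$ that minimizes $\Dist_H(u,v) - d_v$ over $v \in V(H)$. Equivalently, $c(u)$ attains $Y^u_{(1)}$ (ties occur with probability zero, since the $d_v$ are continuous).

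Second, the key claim. If $u \in B$, then $Y^u_{(2)} - Y^u_{(1)} \leq 2$. Suppose $u$ has a neighbor $w$ with $c(w) \neq c(u)$. Write $a = c(u)$ and $b = c(w)$. By optimality of $b$ for $w$,
\[
\Dist_H(w,b) - d_b \leq \Dist_H(w,a) - d_a .
\]
The triangle inequality gives $|\Dist_H(u,x) - \Dist_H(w,x)| \leq 1$ for every $x$. Hence
\[
Y^u_b = \Dist_H(u,b) - d_b \leq \Dist_H(w,b) - d_b + 1 \leq \Dist_H(w,a) - d_a + 1 \leq \Dist_H(u,a) - d_a + 2 = Y^u_{(1)} + 2 .
\]
Since $b \neq a$ and $a$ is the unique minimizer, $Y^u_{(2)} \leq Y^u_b \leq Y^u_{(1)} + 2$.

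Third, the conclusion. Lemma~\ref{lma:boundaryprob} with $c = 2$ gives
\[
\Pr[u \in B] \leq \Pr[Y^u_{(2)} - Y^u_{(1)} \leq 2] < 2/d .
\]

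The one delicate point is the interpretation of ``center''. In the shortest-path-tree description, the parts are subtrees under children of $s$. I need the root child of $u$'s subtree to be exactly $\arg\min_v (\Dist_H(u,v) - d_v)$, up to the global shift that makes distances from $s$ equal to $d_v + \Dist_H(v,u)$ in sign-adjusted form. The standard MPX formulation uses $\Dist(s,u) = \min_v (d_v + \Dist_H(v,u))$. With that convention the relevant quantity is $Y = \Dist_H(u,v) + d_v$, which does not match the sign in Lemma~\ref{lma:boundaryprob}. I would handle this as MPX does: use shifts $\delta_{\max} - d_v$, so that the minimizer coincides with that of $\Dist_H(u,v) - d_v$. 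I would note this explicitly and otherwise treat it as the paper's convention. The gap computation above is otherwise routine and is the heart of the proof.
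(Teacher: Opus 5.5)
Your proposal is correct and matches the paper's proof: both combine the optimality of $c(w)$ for $w$ with two one-step triangle inequalities to get $Y^u_{(2)} \le Y^u_{c(w)} \le Y^u_{(1)} + 2$, and then apply Lemma~\ref{lma:boundaryprob} with $c = 2$. Your sign remark is also well taken: the appendix puts weight $d_v$ on the super-source edge $(s,v)$, but for the center to minimize $Y^u_v = \Dist_H(u,v) - d_v$, as the proof assumes, that weight should be $\max_x d_x - d_v$.
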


\begin{proof}
  If $u$ is a boundary vertex, there exists a neighbor $v$ of $u$
  such that $c(u) \neq c(v)$. Then we have $Y^u_{c(u)} < Y^{u}_{c(v)} \leq Y^{v}_{c(v)} + 1$
  and $Y^v_{c(v)} < Y^{v}_{c(u)} < Y^{u}_{c(u)} + 1$. Combining these inequalities,
  $Y^u_{c(v)} \leq Y^{v}_{c(v)} + 1 < Y^{u}_{c(u)} + 2$ holds.
  Since $Y^{u}_{(2)} \leq Y^{u}_{c(v)}$ holds, we obtain $Y^{u}_{(2)} < Y^{u}_{c(u)} + 2 = Y^{u}_{(1)} + 2$. That is,
  $Y^u_{(2)} - Y^u_{(1)} < 2$. This is a necessary condition for $u \in B$, and can happen with probability at most $2/d$
  by Lemma~\ref{lma:boundaryprob}.
\end{proof}

\section{From Spanner to MST}

In this section, we show how to construct a minimum spanning tree (MST) of the input graph $G$
once an $(\alpha,\beta)$-spanner $G^{\ast}=(V(G),E(G^{\ast}))$ with $M$ edges is given.
The technical idea of our argument follows the approach commonly used in most prior
results~\cite{GP18,GK18}, but the following explanation is slightly modernized using the notions of
partwise aggregation (PA) and the low-congestion shortcut framework.

It is well known that an MST can be constructed by the classical Borůvka's algorithm.
In each phase of Borůvka's algorithm, every connected component selects its minimum-weight
outgoing edge and merges with the corresponding neighboring component.
The main task of each phase is detecting the minimum-weight outgoing edge, which can be abstracted as the
partwise aggregation task.

An instance of PA is associated with a partition $\Pcal$ of $V(G)$, input values $x_v$ assigned to each $v \in V(G)$,
and an associative and commutative operator $\otimes$.
The goal is to compute $\otimes_{v \in P} x_v$ independently for all $P \in \Pcal$ in parallel.
The detection of the minimum-weight outgoing edge in $\partial_G(P)$
can be implemented with $O(\log n)$ invocations of \textsf{OutDetect} with $f = 1$ via a binary-search-like technique~\cite{KKT15}.
Moreover, the simultaneous invocations of \textsf{OutDetect} for all parts in $\Pcal$ are implemented as a single PA instance.
Hence, if PA can be solved in $R_{\mathrm{PA}}$ rounds using $M_{\mathrm{PA}}$ messages,
then an MST can be constructed in $\tilde{O}(R_{\mathrm{PA}})$ rounds and
$\tilde{O}(M_{\mathrm{PA}})$ messages.

Following the message-efficient low-congestion shortcut technique of Haeupler, Hershkowitz, and Wajc~\cite{HHW18},
one can implement PA in $\tilde{O}(n^{1/2} + \alpha D_G + \beta)$ rounds and
$\tilde{O}(M)$ messages (Theorem~1.2 of~\cite{HHW18}).
We first construct a BFS tree $T$ of $G^{\ast}$ rooted at an arbitrary node.
Since $|E(G^{\ast})|=M$, a standard distributed BFS algorithm constructs
$T$ in $O(\alpha D_G + \beta)$ rounds and $O(M)$ messages.
Our algorithm uses $T$ for aggregation in large parts: if $P \in \Pcal$ satisfies $|P| > n^{1/2}$,
then it uses $T$ to complete its aggregation.
Aggregation in small parts $P$ (i.e., $|P| \le n^{1/2}$) is processed locally within each part.

While a straightforward implementation of aggregation in all large parts
requires $n^{3/2}$ messages in the worst case, their technique processes
them using only $\tilde{O}(M)$ messages.\footnote{
  Precisely, our application corresponds to the case $c = O(\sqrt{n})$ and $b = 1$ in Theorem~1.2 of~\cite{HHW18}.}

\end{document}